\useunder{\uline}{\ul}{}
\theoremstyle{plain}
\newtheorem{theorem}{\protect\theoremname}
 \theoremstyle{plain}
 \newtheorem{conjecture}{\protect\conjecturename}
 \theoremstyle{plain}
 \theoremstyle{plain}
  \newtheorem{lemma}{\protect\lemmaname}
 \theoremstyle{remark}
 \newtheorem{remark}{\protect\remarkname}
  \newtheorem{assumption}{\protect\assumptionname}
\theoremstyle{assumption}
  \theoremstyle{proposition}
 \newtheorem{proposition}{\protect\propositionname}
\theoremstyle{algorithm}
 \providecommand{\definitionname}{Definition}
 \providecommand{\lemmaname}{Lemma}
 \providecommand{\propositionname}{Proposition}
 \providecommand{\remarkname}{Remark}
\providecommand{\theoremname}{Theorem}
\providecommand{\conjecturename}{Conjecture}
\providecommand{\assumptionname}{Assumption}
\providecommand{\algorithmname}{Algorithm}
\begin{document}

 \title{ Age-of-information minimization via opportunistic sampling by an energy harvesting source
 \thanks{Akanksha Jaiswal is with the Department of Electrical Engineering, Indian Institute of Technology, Delhi. Email:  akanksha.jaiswal@ee.iitd.ac.in.  Arpan Chattopadhyay is with the Department of Electrical Engineering and the Bharti School of Telecom Technology and Management, Indian Institute of Technology, Delhi. Email:  arpanc@ee.iitd.ac.in.   Amokh Varma is with the Department of Mathematics, Indian Institute of Technology, Delhi. Email: mt6180527@maths.iitd.ac.in. }
 \thanks{A. C. acknowledges support via the professional development fund and professional development allowance from IIT Delhi, grant no. GP/2021/ISSC/022 from I-Hub Foundation for Cobotics and grant no. CRG/2022/003707 from Science and Engineering Research Board (SERB), India.}

\thanks{The preliminary version of this paper appeared in \cite{jaiswal2021minimization}.}}

 \author{
Akanksha Jaiswal, Arpan Chattopadhyay and Amokh Varma\\

 }
 
\maketitle
\begin{abstract} 
Herein, minimization of time-averaged age-of-information (AoI) in an energy harvesting (EH) source setting is considered. The EH source opportunistically samples one or multiple processes over discrete time instants and sends the status updates to a sink node over a wireless fading channel. Each time, the EH node decides whether to probe the link quality and then decides whether to sample a process and communicate based on the channel probe outcome. The trade-off is between the freshness of information available at the sink node and the available energy at the source node. We use infinite horizon Markov decision process (MDP) to formulate the AoI minimization problem for two scenarios where energy arrival and channel fading processes are: (i) independent and identically distributed (i.i.d.), (ii) Markovian. In i.i.d. setting, after channel probing, the optimal source sampling policy is shown to be a threshold policy. Also, for unknown channel state and EH characteristics, a variant of the Q-learning algorithm is proposed for the two-stage action model, that seeks to learn the optimal policy.  For Markovian system, the problem is again formulated as an MDP, and a learning algorithm is provided for unknown dynamics. Finally, numerical results demonstrate the policy structures and performance trade-offs.
\end{abstract}

\begin{IEEEkeywords}
Age-of-information, remote sensing, energy harvesting,  Markov decision process (MDP), Reinforcement learning.
\end{IEEEkeywords}

\section{Introduction}\label{section:introduction}
In recent years, the need for combining physical systems with the cyber world has attracted significant research interest. These cyber-physical systems (CPS) are supported by ultra-low power, low latency Internet of Things (IoT) networks, and encompass a large number of applications such as vehicle tracking, environment monitoring, intelligent transportation, industrial process monitoring, smart home systems, etc. Such systems often require the deployment of sensor nodes to monitor a physical process and send real-time status updates to a remote estimator over a wireless network. However, for such critical CPS applications, minimizing mean packet delay without accounting for delay jitter can often be detrimental to the system's performance. Also, mean delay minimization does not guarantee delivery of the observation packets to the sink node in the same order in which they were generated, thereby often resulting in unnecessarily dedicating network resources towards delivering outdated observation packets despite the availability of a freshly generated observation packet in the system. Hence, it is necessary to take into account the freshness of information of the data packets, apart from the mean packet delay. 

Recently, a metric named Age of Information (AoI) has been proposed \cite{kaul2012real} as a measure of the freshness of the information update. In this setting, a sensor monitoring system generates a time-stamped status update and sends it to the sink over a network. At the time $t$, if the latest monitoring information available to the sink node comes from a packet whose time-stamped generation instant was $t'$, then the AoI at the sink node is computed as $(t-t’)$. Thus, AoI has emerged as an alternative metric to mean delays \cite{talak2018can}.

 However, the timely delivery of status updates is often limited by energy and bandwidth constraints in the network. Recent efforts towards designing EH source nodes ({\em e.g.}, source nodes equipped with solar panels) have opened a new research paradigm for IoT network operations. The energy generation process in such nodes is very uncertain and typically modeled as a stochastic process. The harvested energy is stored in an energy buffer as energy packets and used for sensing and communication as and when needed. This EH capability significantly improves network lifetime and eliminates the need for frequent manual battery replacement, but poses a new challenge towards network operations due to uncertainty in the available energy at the source nodes at any given time. 
 
 Motivated by the above challenges, we consider the problem of minimizing the time-averaged expected AoI in a remote sensing setting, where a single EH source probes the channel state, samples one or multiple processes, and sends the observation packets to the sink node over a fading channel. The energy generation process is modeled as a discrete-time i.i.d. process, and a finite energy buffer is considered. Two variants of the problem are considered: (i) single process with channel state information at the transmitter (CSIT), and (ii) multiple processes with CSIT. Channel state probing and process sampling for time-averaged expected AoI minimization is formulated as an MDP, and the threshold nature of the optimal policy is established analytically for each case. Next, we propose reinforcement learning (RL) algorithms to find the optimal policy when the channel statistics and energy harvesting characteristics are unknown. We also extend these works to the setting where the system has a Markovian channel and Markovian energy harvesting process.  In addition, we consider AoI minimization without channel probing, and compare its performance numerically with our algorithms that involve channel probing. Interestingly,     channel probing yields smaller AoI if the energy required in probing is significantly smaller than sampling energy.  Numerical results validate the theoretical results and intuitions.

 \subsection{Related work} 
 Initial efforts towards optimizing AoI mostly involved the analysis of various queueing models; {\em e.g.,} \cite{kaul2012real} for analyzing a single source single server queueing system with FCFS service discipline, \cite{kaul2012status} for LCFS service discipline for M/M/1 queue, \cite{yates2018age} and \cite{kaul2020timely} for multi-source single sink system with M/M/1 queueing at each source, \cite{kosta2019age} for AoI performance analysis for multi-source single-sink infinite-buffer queueing system where unserved packets are substituted by available newer ones, etc. Additionally, the paper \cite{fountoulakis2023scheduling} has developed a scheduling policy for two users- a deadline-constrained user, and an AoI-oriented user that send their information to a single receiver through an error-prone channel (modeled as both i.i.d.  and time-correlated channel). The authors formulated the average AoI minimization problem with timely throughput constraints as a constrained Markov decision process (CMDP). The authors in \cite{chen2023minimizing} have considered the AoI-minimization problem for the peak power-constrained base station (BS) which sends updates to multiple users through time-varying, Markovian channels, and proposed a truncated multi-user scheduling policy. The optimal threshold-based sampling policy for AoI minimization for a source sending samples to a remote estimator over an unreliable channel with feedback has been proposed in \cite{pan2022optimal}. 
 
 On the other hand, a number of papers have considered  AoI minimization problem under EH setting: \cite{farazi2018age} for the derivation of average AoI for a single source having finite battery capacity, \cite{arafa2017age} for the derivation of the minimal age policy for EH two-hop network, \cite{farazi2018average} for average AoI expression for single source EH server, \cite{hu2018age} for AoI minimization for a wirelessly powered user, \cite{arafa2017age1} for sampling, transmission scheduling and transmit power selection for a single source single sink system over infinite time horizon where the delay is dependent on the packet transmission energy. The authors in 
 \cite{chen2019age} have considered two source nodes (power grid node and EH sensor node) sending different data packets to a common destination by using multiple access channel, and derived the delay and AoI of two source nodes respectively. The authors of \cite{zheng2019closed} have analyzed the non-linear AoI for EH system, whereas \cite{abd2022age, abd2022closed } focused on the characterization of moment generating function of AoI in EH status update systems under various queueing disciplines.  In \cite{khorsandmanesh2021average}, average AoI and peak AoI (PAoI) are analyzed for a source and a relay network where energy is harvested from a power station.

 There have also been several other works on developing optimal scheduling policy for minimizing AoI for EH sensor networks \cite{wu2017optimal,yang2015optimal,bacinoglu2018achieving,feng2021age,bacinoglu2015age,leng2019age,gindullina2021age,ceran2021learning,arafa2018age,tunc2019optimal,bacinoglu2017scheduling,arafa2021timely}. For example, \cite{wu2017optimal} has investigated optimal online policy for single sensor single sink system with a noiseless channel for infinite, finite and unit size battery capacity; for finite battery size, it has provided energy aware status update policy. The paper 
 \cite{yang2015optimal} has considered a multi-sensor single sink system with {\em infinite} battery size, and proposed a randomized myopic scheduling scheme. For an EH source with finite battery and Poisson energy arrival process, the authors in \cite{bacinoglu2018achieving} have provided age-energy trade-off and optimal threshold policy for AoI minimization, but unlike our work, they have not considered probing/estimation of a fading channel while making a sampling decision. 
 In \cite{feng2021age}, the optimal online status update policies to minimize the long-run average AoI for EH source with updating erasures have been proposed. It has been shown that the best-effort uniform updating policy is optimal when there is no feedback and the best-effort uniform updating with re-transmission (BUR) policy is optimal when feedback is available to the source. The authors of 
 \cite{bacinoglu2015age} have examined the problem of minimizing AoI under a constraint on the count of status updates. The authors of \cite{leng2019age} have addressed the AoI minimization problem for cognitive radio communication systems with EH capability; they formulated optimal sensing and updating for perfect and imperfect sensing as a partially observable Markov decision process (POMDP). Information source diversity, {\em i.e.,} multiple sources tracking the same process but with dissimilar energy cost, and sending status updates to the EH monitoring node with finite battery size, has been considered in \cite{gindullina2021age} with an MDP formulation, but no structure was provided for the optimal policy. In \cite{ceran2021learning}, reinforcement learning has been used to minimize AoI for a single EH sensor with HARQ protocol where a decision is made whether to send new status updates or retransmit failed status updates and it has shown the existence of a threshold-based optimal policy. In \cite{tunc2019optimal}, the authors have formulated optimal transmission schemes for an EH sensor that sends update packets to the BS by considering different transmission modes (controlled by power and error probability) and battery recovery setting as an MDP. The authors of \cite{bacinoglu2017scheduling} have developed a threshold policy for minimizing AoI for a single-sensor single-sink system with an erasure channel and no channel feedback. For a system with Poisson energy arrival, unit battery size, and error-free channel, it has shown that a threshold policy achieves an average age lower than that of a zero-wait policy; based on this, a lower bound on average age for general battery size and erasure channel has been derived.
 
 The paper \cite{abd2020reinforcement}  has proposed a threshold-based sampling policy for AoI minimization in a multi-source system and also uses deep reinforcement learning to learn the optimal policy. Here, it is assumed that the source nodes harvest energy from the destination, but it cannot simultaneously harvest energy and transmit data by using downlink and uplink fading channels, respectively. The authors in \cite{hatami2021aoi}  have proposed optimal threshold policy and RL algorithms for a system with cache-enabled edge node between users and EH sensors, which either commands the sensor to send an update or retrieves the age measurement from the cache on user request. However, these papers \cite{ceran2021learning,abd2020reinforcement,hatami2021aoi} have not considered channel probing as done in our paper, which makes our system model completely different from them and it involves challenges to find out optimal probing and sampling decisions taken by the source node (capable of harvesting energy and transmit data simultaneously) at each time instant. Also, we have formulated the AoI minimization problem and proposed RL algorithms for both the i.i.d. system as well as the Markovian system which involves a two-stage action model (probing and sampling), which is not available in the literature.
  
  Usually, a source node probes the channel by sending a pilot signal which consumes resources. Hence, the source node needs to identify the optimal channel probing and data transmission strategies, in order to optimize energy consumption.  Several papers have considered channel probing policies for wireless communication systems \cite{ chang2007optimal,hentati2019energy,chaporkar2008optimal,johnston2013channel}. The authors in \cite{yu2023aoi} have considered a system model where an energy-limited robot sends short status packets to a control center (CC) over a wireless fading channel which is estimated by sending a pilot signal during each coherence time. They have formulated the average AoI minimization problem as a CMDP with an average power consumption constraint and proposed a threshold-based status update scheme. However, these papers have not used channel probing policies for AoI minimization under EH remote sensing system, which motivates us to include channel probing before transmitting information updates, and the major challenges brought by the consideration of probing lies in the formulation of two-stage MDP model which involves two actions, i.e., channel probing and sampling decision-making at the same time. The value iteration used here for finding the optimal solution involves intermediate iterates which is non-standard. Also, this setup has not been considered before. Further, for unknown dynamics, Q-learning formulation involving a two-stage action model is not straightforward and different from that available in the literature.  Later, in Fig. \ref{fig5}, we demonstrate that the channel probing is useful in minimizing AoI under practical system parameters settings.

 \vspace{-2.5mm}

\subsection{ Contributions}
\begin{enumerate}
  \item We formulate the problem of minimizing the time-averaged expected AoI in an EH remote sensing system with a single source monitoring one or multiple processes, as an MDP with a two-stage action model. The two stages of actions involve channel probing and process sampling at each time instant. The considered MDP with a two-stage action model is different from standard MDP in the literature since it involves using certain {\em intermediate value function}. Under the assumptions of i.i.d. time-varying channel with CSIT, channel state probing capability at the source, and finite battery size, for the single/multiple process(es), we derive the optimal sampling policy structures which turn out to be a simple threshold policy on the probed channel quality which is a function of the age of the process(es). The source node, depending on the current age of a process (the largest age of the processes), decides whether to probe the channel or not. Next, based on the channel probe outcome, the source node decides whether to sample the process (with the largest age) and send an observation packet, or to remain idle. Thus, the MDP involves taking action in two stages at each time instant where a sampling decision is taken only if the channel is probed.
  
   \item We prove the convergence of an analogue of value iteration for this two-stage MDP.  Our analysis directly shows that the difference between the value function iterate and the optimal value function decreases to $0$ exponentially fast.
  
  \item We prove certain interesting properties of various cost functions which are useful for deriving optimal threshold structures. For example, we prove that the primary cost value functions increase with age, and the intermediate cost-to-go value functions decrease with packet success probability in the probed channel. 
    \item We also formulate the AoI minimization problem for a Markovian system model with a source monitoring a single process as an MDP; the channel dynamics and energy harvesting characteristics are modeled as Markov chains in this case. We provide optimality equations for the MDP. Certain conjectures regarding the threshold structure of the optimal policy for this MDP are supported numerically. Note that,  these results can be extended to multiple processes scenarios, but we have omitted this case due to lack of space.  
  \item  For a source sensing a single process with unknown channel statistics and energy harvesting characteristics, we propose reinforcement learning algorithms that yield the optimal policies for the formulated MDPs. However, unlike standard Q-learning algorithm involving a single action at each time instant, we propose a variant of the Q-learning algorithm, that involves taking the action in two stages; first, the source decides whether to probe the channel quality and if the channel quality turns out to be good, then it further decides whether to sample and communicate based on its available energy and age in sampling. Similar techniques can also be used for multiple processes.  The Q-learning algorithms are based on asynchronous stochastic approximation techniques \cite{borkar1998asynchronous}.
\end{enumerate}
\subsection{ Organization}
The rest of the paper is organized as follows. The system model has been explained in Section~\ref{section:system-model}. AoI minimization for both single and multiple process case with i.i.d. system model is addressed in Section~\ref{section:single-sensor-single-process}, whereas AoI minimization for Markovian system model is considered in Section~\ref{section:Markovian channel and Markovian energy}. Further, reinforcement learning for the i.i.d. and the Markovian system model is proposed in Section~\ref{section:RL}. Lastly, numerical results are provided in Section~\ref{section:numerical-work}, followed by the conclusions in Section~\ref{section:conclusion}. All proofs are provided in the appendices.  

{\bf Notation:} Throughout this paper, $\mathbb{P}(\cdot)$
 and $\mathbb{E}(\cdot)$ denote the probability and expectation operators, respectively. For ease of reference, the other key notations used in the paper are listed in Table I.

 \vspace{-8pt}
\section{System model}\label{section:system-model} 
We consider an EH source capable of sensing one out of $N$ different processes at a time, and reporting the observation packet to a sink node over a fading channel; see Fig.~\ref{system-model}. We assume that the source node can sense the processes perfectly without any error and loss.
 Time is discretized with the discrete-time index $t \in \{0, 1,2,3, \cdots\}$. At each time, the source node can decide whether to estimate the quality of the channel from the source to the sink or not. If the source node decides to probe the channel state, it can further decide whether to sample a process and communicate the data packet to the sink or not, depending on the instantaneous channel quality. The source has a finite energy buffer of size $B$ units, where $E_{p}$ unit of buffer energy is used to probe the channel once and $E_{s}$ unit of buffer energy is used in sensing and communication of one packet.  We assume that the source can simultaneously harvest energy and transmit the data packet using previously stored energy in the buffer; see \cite{wang2015design, das2021effect, ulukus2015energy} for reference to solar panel assisted wireless communication nodes where energy harvesting and packet transmission happen simultaneously.  

\begin{figure}[h]
  \begin{center}
 \includegraphics[height=2.5cm,width=7cm]{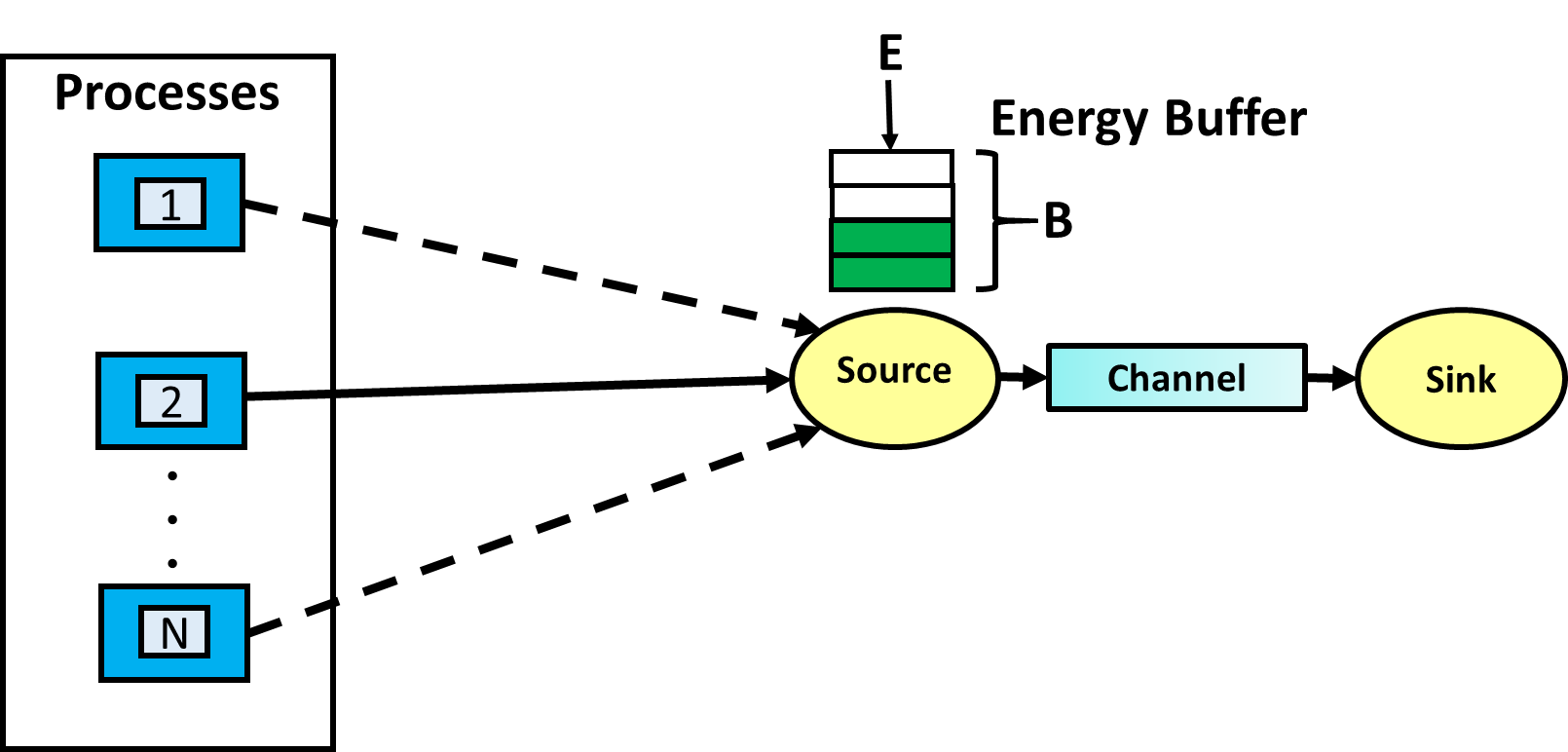}
 \caption{Pictorial representation of a remote sensing system where an EH source samples one of $N$ number of processes at a time and sends the observation packet to a sink node.}
 \label{system-model}
 \end{center}
 \vspace{-5mm}
\end{figure}

 \vspace{-8pt}
\subsection{Channel model}\label{subsection:system_Channel model}
We consider a fading channel between the source and the sink, where a particular fading state is characterized by the probability of success of a transmitted packet. 
We denote by $p(t) \in \{p_1, p_2, \cdots, p_m\}$ the probability of packet transmission success from the source to the sink node at time~$t$, and by $C(t) \in \{C_1, C_2,\cdots, C_m\}$ the corresponding channel state, where $m$ is the finite number of channel states. The packet success probability corresponding to channel state $C_j$ is given by $p(C_j)=p_j$. Let us also denote by $r(t) \in \{0,1\}$ the indicator that the packet transmission from the source to the sink at time~$t$ is successful. Obviously,  $\mathbb{P}(r(t)=1|C(t)=C_j)=p_j$ for all $j \in \{1,2,\cdots,m\}$. It is assumed that the channel state $C(t)$ is learnt perfectly via a channel probe. For many practical systems, the data packets that need to be transmitted are very long and the pilot signal used for channel estimation is short in duration and requires very less time as compared to the time required for data transmission. Therefore, the channel can be estimated instantaneously without a penalty in time.

\noindent {\bf I.I.D. channel model: }
In this case, we assume that $\{p(t)\}_{t \geq 0}$ is i.i.d. across $t$, with $\mathbb{P}(p(t)=p_j)=q_j$ for all $j \in \{1,2,\cdots,m\}$  \cite{fountoulakis2023scheduling}.  

\noindent{\bf Markovian channel model: } Here, we model the fading channel as finite state Markov chain \cite{wang1995finite, zhang1999finite, yao2021age, fountoulakis2023scheduling}, where the one-step state transition probability from state $C_i$ to state $C_j$ is denoted by $q_{i,j}$ and the $t$-step transition probability is given by $q_{i,j}^{(t)} = \mathbb{P}(C(t) = C_j|C(0) =C_i )$ for all $i, j \in \{1,2,\cdots,m\}$.  

 \vspace{-8pt}
\subsection{Energy harvesting process}\label{subsection:system_Energy harvesting process}
Let $A(t)$ denote the number of energy packet arrivals to the energy buffer at time~$t$, and $E(t)$ denote the energy available to the source at time~$t$, for all $t \geq 0$. 
We consider the  following two  models for energy harvesting:

\noindent{\bf I.I.D. model: } The energy packet generation process $\{A(t)\}_{t \geq 0}$ in the energy buffer is assumed to be an i.i.d. process with known mean $\lambda>0$;  see \cite{hatami2021aoi, gindullina2021age, leng2019age, chen2019age} for reference.

\noindent{\bf Markov model: }  To address the temporal correlation in energy generated by many renewable energy sources such as wind and solar, the papers  \cite{michelusi2013transmission,sombabu2020age} have considered a two-state Markov chain to model the harvested energy. Motivated by this, we model the EH state of the source as a two-state Markov chain with states $\{H_{1}, H_{2}\}$, where $H_{1}$ is called the harvesting state and $H_{2}$ is called the non-harvesting state. When the source node is in the harvesting state, the energy packet generation process is considered to be a (conditionally) i.i.d. process (see \cite{sombabu2020age}) with known mean $\lambda>0$. When the source node is in a non-harvesting state, then the energy packet generation rate is zero. The sate transition probability from state $H_{1}$ to state $H_{2}$ is $h_{1,2}$ and from state $H_{2}$ to state $H_{1}$ is $h_{2,1}$ and it is governed by a Markov chain. Thus, the energy generation process is a Markov-modulated i.i.d. process and it exhibits temporal correlation.  

We assume that, in case the energy buffer at the source node is full, the newly generated energy packets will not be accommodated unless $E_{p}$ unit of energy is spent in probing.
 \vspace{-8pt}
\subsection{Decision process and policy}
At time~$t$, let $b(t)\in \{0,1\}$ denote the indicator of deciding to probe the channel, and $a(t) \in \{0,1,\cdots,N\}$ denote the identity of the process being sampled, with $a(t)=0$ meaning that no process is sampled, and $b(t)=0$ meaning that the channel is not probed. Also, $b(t)=0$ implies $a(t)=0$. The set of possible actions or decisions is denoted by $\mathcal{A}=\{\{0,0\}\cup \{1\times \{0,1,2,\cdots,N\}\}\}$, where a generic action at time~$t$ is denoted by $(b(t),a(t))$.

Let us denote by $\tau_k(t)\doteq \sup\{0 \leq \tau < t: a(\tau)=k, r(\tau)=1\}$ the last time instant before time~$t$, when process~$k$ was sampled and the observation packet was successfully delivered to the sink. The age of information (AoI) for the $k$-th process at time~$t$ is  given by $T_k(t)=(t-\tau_k(t))$. However, if $a(t)=k$ and $r(t)=1$, then $T_k(t)=0$ since the current observation of the $k$-th process is available to the sink node. 
A generic Markov scheduling policy is a collection of mappings $\{\mu_t\}_{t \geq 0}$, where $\mu_t$ maps   $(E(t), C(t), \{T_k(t)\}_{1 \leq k \leq N})$ to the action space $\mathcal{A}$. If $\mu_t=\mu$ for all $t \geq 0$, the policy is called stationary, else non-stationary. 

We seek to find a stationary scheduling policy $\mu$ that minimizes the expected sum AoI which is averaged over time:
\begin{align}\label{eqn:main-problem}
\footnotesize
 \min_{\mu} \frac{1}{T } \sum_{t=0}^T \sum_{k=1}^N \mathbb{E}_{\mu} (T_k(t))
 \normalsize
\end{align}


\begin{table}[t!]
\caption{Key Notations Used}
\label{tab:my-table}
\resizebox{\columnwidth}{!}{%
\begin{tabular}{|l|l|}
\hline
Notation Used & Description                                                                                     \\ \hline
$t$           & Discrete time index                                                                             \\ \hline
$B$           & Energy buffer size                                                                              \\ \hline
$E_p$         & Energy required for probing the channel                                                         \\ \hline
$E_s$         & Energy required for sensing and communicating a data packet                                     \\ \hline
$N$           & Number of processes                                                                             \\ \hline
$C_j$         & Channel state $j$, $\forall$ $j$ $\in$ $\{1, 2, \cdots, m\}$, where $m$ denotes a finite number \\ \hline
$p_j$         & Packet success probability of channel state $C_j$                                               \\ \hline
$q_j$         & Probability of being in a channel state $C_j$                                                   \\ \hline
$\lambda$     & Energy arrival rate                                                                             \\ \hline
$A$           & Number of energy packet arrivals  in a slot                                                              \\ \hline
$E$           & Energy available at the energy buffer                                                           \\ \hline
$T_k$         & AoI of the $k$-th process                                                                         \\ \hline
$H_1$         & Harvesting state of the Markovian channel model                                                 \\ \hline
$H_2$         & Non-harvesting state of the Markovian channel model                                             \\ \hline
$h_{1,2}$     & The transition probability from state $H_1$ to state $H_2$                                      \\ \hline
$r$             & Indicator of successful packet transmission                                                     \\ \hline
$b$             & Indicator of probing channel state                                                              \\ \hline
$a$             & Indicator of sampling a process                                                                 \\ \hline
\end{tabular}%
}
\end{table}

 \vspace{6pt}
\section{I.I.D. system} \label{section:single-sensor-single-process}
In this section, we derive the optimal channel probing, source activation, and data transmission policy for a single EH source sampling a single process ($N=1$) or multiple processes, when the channel and energy harvesting processes follow the i.i.d. model as described in Section~\ref{section:system-model}.

 \vspace{-8pt}
\subsection{Single process ($N=1$)}
 Here, we formulate (1) as a long-run average cost MDP with state space $\mathcal{S} \doteq \{0,1,\cdots,B\} \times \mathbb{Z}_+$ and an intermediate state space $\mathcal{V}=\{0, 1,…..,B\}\times \mathbb{Z}_{+} \times \{C_{1}, C_{2},\cdots, C_{m}\}$.  A generic primary state $s = (E,T)$   means that the energy buffer has $E$ energy packets, and the last successfully received packet was generated at the source node $T$ slots ago.  A generic intermediate state $v = (E,T,C)$  additionally means that the current channel state obtained via probing is  $C$. The action space is $\mathcal{A}=\{\{0,0\}\cup \{1\times\{0,1 \}\}\}$ with $a(t), b(t) \in \{0,1\}$. At each time, if the source node decides not to probe the channel state, then it will not perform sampling, and thus $b(t) = 0, a(t) = 0$, and the expected single-stage AoI cost is $c(s(t), b(t), a(t)) = T$. However, if the source node decides to probe the channel state, the expected single-stage AoI cost is $c(s(t), b(t)=1, a(t)=0) = T$, and $c(s(t), b(t)=1, a(t)=1) = T(1-p(C))$, where the expectation is taken over packet success probability $p(C)$. We first formulate the average-cost MDP problem as an $\alpha$-discounted cost MDP problem with $\alpha \in(0, 1)$, and derive the optimal policy, from which the solution of the average cost minimization problem can be obtained by taking $\alpha \rightarrow 1$  \cite[Section 4.1]{bertsekas2011dynamic}.

\subsubsection{Optimality equation}
Let $J^{*}(E,T)$ be the optimal value function for primary state $(E,T)$ in the discounted cost problem, and let $W^*(E,T,C)$ be the cost-to-go from an intermediate state $(E,T,C)$. The Bellman equations \cite[Proposition 7.2.1]{bertsekas2005dynamic}  for the  $\alpha$-discounted cost MDP problem are given by \eqref{eqn:Bellman-eqn-single-sensor-single-process-with-fading-general} which are solved by standard value iteration technique \cite[Proposition 7.3.1]{bertsekas2005dynamic}.

\scriptsize
\begin{eqnarray} \label{eqn:Bellman-eqn-single-sensor-single-process-with-fading-general}
J^{*}(E \geq E_{p}+E_{s},T)&=&min \bigg\{T+\alpha \mathbb{E}_{A}J^{*}(min\{E+A,B\},T+1), \nonumber\\
&&V^{*}(E,T) \bigg\}\nonumber\\
V^{*}(E,T)&=& \sum_{j=1}^m q_{j} W^{*}(E,T,C_{j})\nonumber\\ 
W^{*}(E,T,C)&=&min\{T+\alpha \mathbb{E}_{A}J^{*}(min\{E-E_{p}+A,B\},T+\nonumber\\ 
&&1),T(1-p(C))+\alpha p(C)\mathbb{E}_{A}J^{*}(min\{E-E_{p}\nonumber\\
&&-E_{s}+A,B\},1)+\alpha (1-p(C))\mathbb{E}_{A}J^{*}(min\{E\nonumber\\
&&-E_{p}-E_{s}+A,B\},T+1)\}\nonumber\\
J^{*}(E< E_{p}+E_{s},T)&=&T+\alpha \mathbb{E}_{A}J^{*}(min\{E+A,B\},T+1)
\end{eqnarray}
\normalsize

The first expression in the minimization in the R.H.S. of the first equation in \eqref{eqn:Bellman-eqn-single-sensor-single-process-with-fading-general} is the cost of not probing channel state ($b(t)=0$), which includes single-stage AoI cost $T$ and an $\alpha$ discounted future cost for a random next state $(min\{E+A,B\},T+1)$, averaged over the distribution of the number of energy packet generation $A$. The quantity $V^{*}(E,T)$ is the expected cost of probing the channel state, which explains the second equation in \eqref{eqn:Bellman-eqn-single-sensor-single-process-with-fading-general}. At an intermediate state $(E,T,C)$, if $a(t)=0$, a single stage AoI cost $T$ is incurred and the next state becomes $(min\{E-E_p+A,B\},T+1)$; if $a(t)=1$, the expected AoI cost is $T(1-p(C))$ (expectation taken over the packet success probability $p(C)$), and the next random state becomes $(min\{E-E_p-E_s+A,B\},1)$ and $(min\{E-E_p-E_s+A,B\},T+1)$ if $r(t)=1$ and $r(t)=0$, respectively. The last equation in \eqref{eqn:Bellman-eqn-single-sensor-single-process-with-fading-general} follows similarly since $b(t)=0, a(t)=0$ is the only possible action when $E<E_p+E_s$. 

Substituting the value of $V^{*}(E,T)$ in the first equation of \eqref{eqn:Bellman-eqn-single-sensor-single-process-with-fading-general}, we obtain the following Bellman equations:

 \scriptsize
\begin{eqnarray}\label{eqn:Bellman-eqn-single-sensor-single-process-with-fading}
&&J^{*}(E \geq E_{p}+E_{s},T)\nonumber\\
&=&min \bigg\{T+\alpha \mathbb{E}_{A}J^{*}(min\{E+A,B\},T+1),\mathbb{E}_{C} \bigg( min\{T+\nonumber\\ 
&&\alpha \mathbb{E}_{A}J^{*}(min\{E-E_{p}+A,B\},T+1),T(1-p(C))+\nonumber\\ 
&&\alpha p(C)\mathbb{E}_{A}J^{*}(min\{E-E_{p}-E_{s}+A,B\},1)+\nonumber\\
&&\alpha (1-p(C))\mathbb{E}_{A}J^{*}(min\{E-E_{p}-E_{s}+A,B\},T+1)\} \bigg) \bigg\} \nonumber\\
&&J^{*}(E < E_{p}+E_{s},T)\nonumber\\
&=&T+\alpha \mathbb{E}_{A}J^{*}(min\{E+A,B\},T+1) 
\end{eqnarray}
\normalsize

\subsubsection{Policy structure}\label{subsubsection:Policy structure-IID-System-N=1}
We first provide the convergence proof of value iteration since we have a two-stage decision process as opposed to traditional MDP where a single action is taken.
\begin{proposition} \label{proposition:convergence-of-value-function-J}
The value function $J^{(t)}(s)$ converges to $J^{*}(s)$ as   $t \rightarrow \infty$.
\end{proposition}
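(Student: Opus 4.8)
The plan is to recast value iteration for this two-stage problem as repeated application of a single combined Bellman operator and then obtain convergence from its contraction/monotonicity properties. Concretely, by substituting the intermediate cost-to-go $W^{*}$ and its channel-average $V^{*}$ into the $J$-equation (exactly the substitution that produced \eqref{eqn:Bellman-eqn-single-sensor-single-process-with-fading}), I define an operator $F$ on value functions by letting $(FJ)(E,T)$ equal the right-hand side of \eqref{eqn:Bellman-eqn-single-sensor-single-process-with-fading} with $J^{*}$ replaced by $J$, and set $J^{(t+1)} = F J^{(t)}$. Since $J^{*}$ is by construction a fixed point of $F$, it suffices to show $J^{(t)} \to J^{*}$.

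First I would isolate the elementary operations out of which $F$ is built: pointwise minimum, and the expectations over the channel law $\{q_j\}$ and the energy arrival $A$. Both are order-preserving and non-expansive, using $|\min\{a_1,b_1\}-\min\{a_2,b_2\}| \le \max\{|a_1-a_2|,|b_1-b_2|\}$ and $|\mathbb{E}X - \mathbb{E}Y| \le \sup|X-Y|$. Crucially, in every branch of \eqref{eqn:Bellman-eqn-single-sensor-single-process-with-fading} the iterate $J$ appears only inside an $\alpha$-discounted future term evaluated at a shifted next state, while the single-stage AoI terms $T$ and $T(1-p(C))$ carry no $J$. Composing the non-expansive inner min over the sampling decision, the non-expansive channel expectation, and the non-expansive outer min over the probing decision, and pulling $\alpha$ out of the future terms, yields $|(FJ_1)(s)-(FJ_2)(s)| \le \alpha \sup_{s'} |J_1(s')-J_2(s')|$ for every $s$. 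The two-stage nesting thus inserts only an extra non-expansive min–expectation layer and does not inflate the modulus beyond $\alpha$; this is the step that certifies that the non-standard two-stage operator behaves like an ordinary discounted DP operator, which is precisely what the proposition must address.

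The hard part will be that the age $T \in \mathbb{Z}_+$ is unbounded while the per-stage cost is exactly $T$, so value functions grow without bound in $T$ and the plain sup-norm is infinite, invalidating a naive contraction. I would resolve this via nonnegativity of all single-stage costs: initialize $J^{(0)} \equiv 0$, use monotonicity of $F$ (immediate, since min, expectation and multiplication by $\alpha>0$ all preserve pointwise order) to show by induction that $\{J^{(t)}(s)\}_t$ is nondecreasing, and bound it above by the finite cost of the never-probe policy, $\frac{T}{1-\alpha} + \frac{\alpha}{(1-\alpha)^2}$, which dominates $J^{(t)}(E,T)$ for every $t$. Monotone convergence then gives a pointwise limit $J^{(\infty)}$; passing the limit through the finitely many minima and the finite-mean/finite-support expectations in $F$ shows $J^{(\infty)} = F J^{(\infty)}$, and the standard discounted-cost DP theory of \cite[Section 4.1]{bertsekas2011dynamic} identifies this fixed point with $J^{*}$. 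As an alternative I would run the contraction argument verbatim but in a weighted sup-norm $\|J\|_w = \sup_{E,T} |J(E,T)|/w(E,T)$ with $w$ growing linearly in $T$, choosing the slope so that $\alpha$ times the one-step weight-growth factor stays below one; the never-probe bound, being linear in $T$, keeps the iterates in the weighted-bounded class.

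The remaining checks are routine: the expectations over $A$ and over the $m$ channel states are finite so the interchange of limit and expectation is legitimate, the minima are attained since they range over finitely many branches, and taking $J^{(0)}\equiv 0$ is without loss of generality. I expect the genuinely non-routine content to be confined to verifying monotonicity and the $\alpha$-modulus non-expansiveness of the nested min–expectation–min operator together with the handling of the unbounded age, with everything else following from the standard theory.
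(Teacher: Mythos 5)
Your proposal is correct, and its core mechanism---the observation that the iterate enters every branch of the two-stage recursion only through $\alpha$-discounted future terms, so that the nested min--expectation--min structure is non-expansive and the combined operator contracts with modulus $\alpha$---is exactly the argument in the paper's proof, which sandwiches $J^{(t+1)}$ between $J^{*}\pm\alpha e_{t}$ with $e_{t}=\max_{s\in\mathcal{S}}|J^{(t)}(s)-J^{*}(s)|$ and concludes $e_{t+1}\leq\alpha e_{t}$. Where you go beyond the paper is in the treatment of the unbounded age coordinate: the paper takes $e_{t}$ to be a finite maximum over $\mathcal{S}=\{0,\dots,B\}\times\mathbb{Z}_{+}$ without justification, even though the single-stage cost equals $T$ and $J^{*}(E,T)$ grows at least linearly in $T$, so for any initialization that does not already track this growth (e.g.\ $J^{(0)}\equiv 0$) the plain sup-norm error is infinite and the contraction inequality is vacuous. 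Your monotone-convergence route---start from $J^{(0)}\equiv 0$, use monotonicity of the operator and the never-probe upper bound $\frac{T}{1-\alpha}+\frac{\alpha}{(1-\alpha)^{2}}$ to obtain a pointwise nondecreasing, bounded sequence, then pass the limit through the finitely many minima and the (effectively finite, since the buffer is capped at $B$) expectations---or your weighted sup-norm alternative closes this gap; the only remaining obligation, which you correctly note, is identifying the resulting fixed point with the optimal discounted cost via standard nonnegative-cost dynamic programming theory. In short, your argument subsumes the paper's and repairs its implicit finiteness assumption.
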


\begin{proof}See Appendix ~\ref{appendix:proof-of-convergence-of-value-function-J}.
\end{proof}
Next, we provide some properties of the value function, which will help in proving the structure of the optimal policy.
\begin{lemma} \label{lemma:single-sensor-single-process-J-decreasing-in-p}
For $N=1$,  
$J^{*}(E, T)$ is increasing in $T$ and $W^{*}(E,T,C)$ is decreasing in $p(C)$. 
\end{lemma}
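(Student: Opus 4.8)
The plan is to prove both monotonicity claims by induction along the value iteration recursion, whose convergence $J^{(t)}\to J^*$ is guaranteed by Proposition~\ref{proposition:convergence-of-value-function-J}. Since a pointwise limit of increasing functions is increasing, it suffices to establish each property for every iterate and then pass to the limit. I initialize with $J^{(0)}\equiv 0$ and let $J^{(t+1)}$, $V^{(t)}$, $W^{(t)}$ denote the right-hand sides of \eqref{eqn:Bellman-eqn-single-sensor-single-process-with-fading-general} with $J^*$ replaced by $J^{(t)}$.

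To show $J^*(E,T)$ is increasing in $T$, I would prove by induction on $t$ that $J^{(t)}(E,\cdot)$ is increasing for every $E$; the base case is immediate. For the inductive step, assume $J^{(t)}(E,\cdot)$ is increasing for all $E$. In the low-energy regime $E<E_p+E_s$, the update $T+\alpha\mathbb{E}_A J^{(t)}(\min\{E+A,B\},T+1)$ is increasing in $T$ because both summands are. In the regime $E\geq E_p+E_s$, the not-probing branch is increasing in $T$ by the same reasoning, so it remains to show $V^{(t)}(E,\cdot)$, and hence each $W^{(t)}(E,\cdot,C)$, is increasing. Inside $W^{(t)}$, the $a=0$ term is increasing as above; for the $a=1$ term, the immediate cost $T(1-p(C))$ is increasing since $1-p(C)\geq 0$, the $J^{(t)}(\cdot,1)$ term is constant in $T$, and the $\alpha(1-p(C))\mathbb{E}_A J^{(t)}(\cdot,T+1)$ term is increasing because $1-p(C)\geq 0$ and $J^{(t)}$ is increasing in its age argument. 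Thus $W^{(t)}(E,\cdot,C)$, a minimum of two increasing functions, is increasing, and $V^{(t)}=\sum_j q_j W^{(t)}(E,\cdot,C_j)$ is increasing as a nonnegative combination. Since $J^{(t+1)}$ is a minimum of increasing functions, it is increasing in $T$, completing the induction.

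For the second claim I would use the just-established monotonicity of $J^*$ in $T$. The $a=0$ term of $W^*(E,T,C)$ does not depend on $p(C)$, so it suffices to show the $a=1$ term is decreasing in $p(C)$. Writing $J_1\doteq\mathbb{E}_A J^*(\min\{E-E_p-E_s+A,B\},1)$ and $J_{T+1}\doteq\mathbb{E}_A J^*(\min\{E-E_p-E_s+A,B\},T+1)$, the $a=1$ term rearranges to
\[
T+\alpha J_{T+1}-p(C)\bigl(T+\alpha(J_{T+1}-J_1)\bigr).
\]
Since $J^*$ is increasing in $T$ and $T+1\geq 1$, we have $J_{T+1}\geq J_1$, so together with $T\geq 0$ and $\alpha>0$ the coefficient of $p(C)$ is nonpositive; hence the $a=1$ term is decreasing in $p(C)$. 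Then $W^*(E,T,C)$, being the minimum of a term constant in $p(C)$ and a term decreasing in $p(C)$, is itself decreasing in $p(C)$.

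The main subtlety is the ordering of the two claims: the decrease of $W^*$ in $p(C)$ relies on $J_{T+1}\geq J_1$, which is exactly the age-monotonicity of $J^*$, so Claim~1 must be proved first. The other point requiring care is that monotonicity must be shown to be preserved by the full Bellman operator, in particular through the nested minimizations defining $W^{(t)}$ and $J^{(t+1)}$ and through the expectation over channel states; this is handled by repeatedly invoking that minima and nonnegative combinations of increasing functions are increasing. No single step is computationally heavy; the work lies in verifying the sign conditions ($1-p(C)\geq 0$, $T\geq 0$) at each branch.
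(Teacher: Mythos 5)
Your proposal is correct and follows essentially the same route as the paper: induction along value iteration to get monotonicity of $J^{(t)}$ (hence $J^*$) in $T$, followed by rewriting the $a=1$ branch of $W$ so that the coefficient of $p(C)$ is seen to be nonpositive via $J_{T+1}\geq J_1$. The only cosmetic difference is that you argue the second claim directly at the fixed point $W^*$ using the already-proved monotonicity of $J^*$, whereas the paper carries it out at the iterate level and passes to the limit; both hinge on the identical rearrangement and sign check.
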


\begin{proof} See Appendix ~\ref{appendix:proof-of-lemma-single-sensor-single-process-fading-cost-increasing-in-T-decrease-in-p}.
\end{proof}
\begin{conjecture}\label{conjecture:single-sensor-single-process-with-fading-policy-structure}
 For $N=1$, the optimal probing policy for the $\alpha$-discounted AoI cost minimization problem is a threshold policy on $T$. For any $E \geq E_{p}+E_{s} $, the optimal action is to probe the channel state if and only if $T\geq T_{th}(E)$ for a threshold function $T_{th}(E)$ of $E$. 
\end{conjecture}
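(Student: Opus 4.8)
The plan is to recast the claim as a monotonicity property of the probing‑versus‑not‑probing cost gap. Fix $E \ge E_p + E_s$. By the first line of \eqref{eqn:Bellman-eqn-single-sensor-single-process-with-fading-general}, probing is optimal exactly when $V^*(E,T) \le g_0(E,T)$, where $g_0(E,T) \doteq T + \alpha\,\mathbb{E}_A J^*(\min\{E+A,B\},T+1)$ is the not‑probe cost. Writing $D(E,T) \doteq g_0(E,T) - V^*(E,T)$, probing is optimal iff $D(E,T)\ge 0$, and a threshold rule on $T$ is \emph{equivalent} to the set $\{T : D(E,T)\ge 0\}$ being an up‑set for each fixed $E$, which holds once $T\mapsto D(E,T)$ is shown to be nondecreasing. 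Let $w_0(E,T,C)$ and $w_1(E,T,C)$ denote the two arguments of the inner $\min$ in $W^*$ (probe‑but‑don't‑sample and probe‑and‑sample). Then $V^*(E,T)=\sum_j q_j\min\{w_0(C_j),w_1(C_j)\}$, so
\[
D(E,T)=\sum_j q_j\,\max\bigl\{g_0-w_0(C_j),\,g_0-w_1(C_j)\bigr\}.
\]
Since a pointwise maximum of nondecreasing functions is nondecreasing and a convex combination preserves monotonicity, it suffices to prove that $T\mapsto g_0-w_0(E,T,C)$ and $T\mapsto g_0-w_1(E,T,C)$ are each nondecreasing.

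Abbreviate $E_1=\min\{E+A,B\}$, $E_2=\min\{E-E_p-E_s+A,B\}$ and $E_3=\min\{E-E_p+A,B\}$, noting $E_1\ge E_3\ge E_2$. A direct cancellation gives
\[
g_0(E,T)-w_0(E,T,C)=\alpha\,\mathbb{E}_A\bigl[J^*(E_1,T+1)-J^*(E_3,T+1)\bigr],
\]
which is nondecreasing in $T$ precisely when $J^*$ has \emph{increasing differences} in $(E,T)$; this in turn forces $J^*$ to be nonincreasing in $E$, which I would record as a separate monotonicity lemma. For the sampling branch,
\[
\begin{aligned}
g_0(E,T)-w_1(E,T,C) &= T\,p(C)-\alpha\,p(C)\,\mathbb{E}_A J^*(E_2,1)\\
&\quad+\alpha\,\mathbb{E}_A\bigl[J^*(E_1,T+1)-(1-p(C))J^*(E_2,T+1)\bigr].
\end{aligned}
\]
The decisive term is the linear $T\,p(C)$, increasing by $p(C)>0$ per unit of $T$; the only competing $T$‑dependence is $\alpha\,\mathbb{E}_A[(1-p(C))d_2(T)-d_1(T)]$, where $d_i(T)$ are one‑step age‑increments of $J^*$ at energies $E_1,E_2$. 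By Lemma~\ref{lemma:single-sensor-single-process-J-decreasing-in-p} these increments are nonnegative, so it suffices to establish the quantitative estimate $\alpha\,\mathbb{E}_A[(1-p(C))d_2(T)-d_1(T)]\ge -p(C)$; a uniform bound on the age‑increments of $J^*$, combined with $p(C)$ bounded away from $0$, yields this.

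All the required structural facts — monotonicity in $T$ (already Lemma~\ref{lemma:single-sensor-single-process-J-decreasing-in-p}), monotonicity (nonincreasing) in $E$, increasing differences in $(E,T)$, and the increment bound — I would prove \emph{simultaneously} by induction along the value‑iteration recursion, whose convergence $J^{(n)}\to J^*$ is guaranteed by Proposition~\ref{proposition:convergence-of-value-function-J}. The induction step shows the two‑stage Bellman operator (outer $\min$ over probing, inner $\min$ over sampling, and the averages $\mathbb{E}_A$ and $\sum_j q_j(\cdot)$) preserves every property, using that $\min$, nonnegative combinations, and the coordinate shifts $T\mapsto T+1$ and $E\mapsto\min\{E-\cdot+A,B\}$ act monotonically. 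Passing $n\to\infty$ transfers everything to $J^*$, and the reduction above then delivers the threshold $T_{th}(E)$.

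\textbf{Main obstacle.} The hard part is propagating \emph{increasing differences} in $(E,T)$ and the increment bound through the inner minimization, because the three branches live at the three distinct energy levels $E,\,E-E_p,\,E-E_p-E_s$, and the sampling branch additionally reweights the future by $(1-p(C))$ and resets the age to $1$. A supermodularity argument that is clean for a single energy shift can break once the argmin of the inner $\min$ switches at different $T$‑values for the two energy levels being differenced, so the usual ``$\min$ of supermodular is supermodular'' closure does not apply directly. I expect this is exactly why the statement is recorded as a conjecture and validated numerically: a fully rigorous proof would need either a careful interchange/coupling argument controlling how the sampling decision inside $W^*$ moves with $E$, or an explicit bound on $J^*(E,T+1)-J^*(E,T)$ uniform in $E$ that dominates the $(1-p(C))$ reweighting. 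Establishing such a uniform increment bound — plausibly via a sample‑path comparison of two age processes started one slot apart — is the step I would attack first, since it simultaneously settles the $w_1$ branch and supplies the estimate needed to close the induction.
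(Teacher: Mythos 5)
First, a point of comparison: the paper does not prove this statement. It is deliberately recorded as a conjecture and only validated numerically in Section~\ref{section:numerical-work}; the only threshold result the paper proves rigorously for $N=1$ is the \emph{second-stage} threshold on $p(C)$ (Theorem~\ref{theorem:single-sensor-single-process-with-fading-policy-structure}), which needs nothing beyond the monotonicity of $J^{*}$ in $T$ from Lemma~\ref{lemma:single-sensor-single-process-J-decreasing-in-p}. The first-stage threshold on $T$ is precisely the part the authors leave open, so there is no paper proof for your argument to be measured against.

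Your reduction of the claim to monotonicity of $D(E,T)=g_0(E,T)-V^{*}(E,T)$ in $T$, and the further decomposition $D=\sum_j q_j\max\{g_0-w_0,g_0-w_1\}$, is a sensible program, but as written it is a plan with the decisive steps missing --- and you say as much. Concretely: (i) the increasing-differences property of $J^{*}$ in $(E,T)$ and the monotonicity of $J^{*}$ in $E$ are invoked but never established; the induction along value iteration breaks exactly where you say it does, because the inner minimization in $W^{*}$ compares branches living at three distinct energy levels and its argmin can switch at different $T$ for the two energies being differenced, so the usual supermodularity-closure under $\min$ does not apply. It is not even clear the property holds in the direction you need: one can equally well argue heuristically that the age-increment $J^{*}(E,T+1)-J^{*}(E,T)$ should be \emph{smaller} at higher energy, which is the opposite monotonicity. (ii) The quantitative estimate for the sampling branch is stated with the inequality in the wrong direction: the $T$-increment of $g_0-w_1$ is $p(C)+\alpha\,\mathbb{E}_A[d_1-(1-p(C))d_2]$, so the condition you need is $\alpha\,\mathbb{E}_A[(1-p(C))d_2-d_1]\le p(C)$, not $\ge -p(C)$; the dangerous case is a large increment $d_2$ at the \emph{lower} energy, which is exactly what increasing differences would have to exclude (and note that if increasing differences did hold, $d_1\ge d_2$ would make this branch automatic, so the ``uniform increment bound'' is not a separate escape route --- everything collapses back onto the unproven supermodularity). (iii) The appeal to ``$p(C)$ bounded away from $0$'' is not available uniformly, since the bound must survive the $q_j$-average over all channel states including arbitrarily poor ones. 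So the gap is genuine; your closing diagnosis of where the difficulty sits is accurate, but the proposal does not close it, which is consistent with the authors' choice to leave the statement as a numerically validated conjecture.
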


\begin{remark}
    The optimal probing decision at $(E,T)$ is obtained by using the condition that the expected cost of probing is less than or equal to the expected cost of not probing. This condition, when further simplified, reduces to checking whether $f(E,T) \geq 0$ for a suitable function $f$. From this, one can prove Conjecture~$1$  if $f(E,T)$ can be shown to be increasing in $T$, which is difficult in this case. A similar difficulty is also faced in other subsequent system models.
\end{remark}

\begin{theorem}\label{theorem:single-sensor-single-process-with-fading-policy-structure}
 For $N=1$, at any time, if the source decides to probe the channel, then the optimal sampling policy is a threshold policy on $p(C)$. For any $E \geq E_{p}+E_{s} $ and probed channel state, the optimal action is to sample the source node if and only if $p(C)\geq p_{th}(E,T)$ for a threshold function $p_{th}(E,T)$ of $E$ and $T$. 
\end{theorem}

\begin{proof} See Appendix ~\ref{appendix:proof-of-theorem-single-sensor-single-process-fading-threshold-policy}.
\end{proof}

The policy structure supports the following two intuitions. Firstly, given $E$ and $T$, the source decides to probe the channel state if AoI is greater than some threshold value. 
Secondly, given $E$ and $T$ and probed channel state, if the channel quality is better than a threshold, then the optimal action is to sample the process and communicate the observation to the sink node. We will later numerically observe in Section~\ref{section:numerical-work} some intuitive properties of $T_{th}(E)$ as a function of $E$ and $\lambda$, and $p_{th}(E,T)$ as a function of $E$, $T$ and $\lambda$.\\

\subsubsection{No probing case}\label{No probing case}
If we  consider the case where the source   samples the process directly without channel probing, then the Bellman equations for this case are given by:

\scriptsize
\begin{eqnarray} \label{eqn:Bellman-eqn-single-sensor-single-process-without-probing}
J^{*}(E \geq E_{s},T)&=&min \bigg\{T+\alpha \mathbb{E}_{A}J^{*}(min\{E+A,B\},T+1), \nonumber\\
&& \mathbb{E}_{C}\bigg( T(1-p(C))+\alpha p(C)\mathbb{E}_{A}J^{*}(min\{E-E_{s}+\nonumber\\
&&A,B\},1)+\alpha (1-p(C))\mathbb{E}_{A}J^{*}(min\{E-E_{s}+A,\nonumber\\
&&B\},T+1) \bigg) \bigg\}\nonumber\\
J^{*}(E< E_{s},T)&=&T+\alpha \mathbb{E}_{A}J^{*}(min\{E+A,B\},T+1) 
\end{eqnarray}
\normalsize

 Here also, we numerically observe an optimal threshold policy on $T$; see Section~\ref{section:numerical-work}.

 \vspace{-12pt}
\subsection{Multiple processes ($N>1$)}\label{subsection:single-sensor-multiple-process}

Here, we formulate the $\alpha$-discounted cost version of \eqref{eqn:main-problem} as an MDP with a generic primary state $s=(E,T_1,T_2,\cdots, T_N)$ which means that the energy buffer has $E$ energy packets, and the latest (successfully) received packet from the  $k$-th  process was generated at the source $T_{k}$ slots ago.  Also, a generic intermediate state is $v=(E,T_1,T_2,\cdots, T_N,C)$ which additionally means that the current channel state $C$ is learnt by probing, has packet success probability $p(C)$. The action space $\mathcal{A}=\{\{0,0\}\cup \{1\times\{0,1,2,\cdots,N \}\}\}$ with $b(t)\in \{0,1\}$ and $a(t) \in \{0,1,\cdots,N\}$. At each time, if the source node decides not to probe the channel state then it will not sample any process, thus $b(t)=0, a(t)=0$, and the expected single-stage AoI cost is $c(s(t),b(t), a(t))=\sum_{i=1}^N T_i$. However, if the source node decides to probe the channel state, the expected single-stage AoI cost is $c(s(t),b(t)=1, a(t)=0)=\sum_{i=1}^N T_i$ and $c(s(t), b(t)=1, a(t)=k)=\sum_{i \neq k} T_i+T_k(1-p(C))$ where the expectation is taken over packet success probability $p(C)$. 
\subsubsection{Optimality equation}
In this case, the Bellman equations are given by:

\scriptsize
\begin{eqnarray}\label{eqn:Bellman-eqn-single-sensor-multi-process-with-fading-general}
&&J^{*}(E \geq E_{p}+E_{s},T_1, T_2,\cdots, T_N)\nonumber\\
&=&min \bigg\{\sum_{i=1}^N T_i+\alpha \mathbb{E}_{A}J^{*}(min\{E+A,B\},T_1+1, T_2+1,\cdots, \nonumber\\
&&T_N+1), V^{*}(E,T_1, T_2,\cdots, T_N)\bigg\}\nonumber\\
&&V^{*}(E,T_1, T_2,\cdots, T_N)\nonumber\\
&=& \sum_{j=1}^m q_{j} W^{*}(E,T_1, T_2,\cdots, T_N,C_{j})\nonumber\\ 
&&W^{*}(E,T_1, T_2,\cdots, T_N,C)\nonumber\\ 
&=& min\{\sum_{i=1}^N T_i+\alpha \mathbb{E}_{A}J^{*}(min\{E-E_{p}+A,B\},T_1+1, T_2+1, \cdots, \nonumber\\ 
&& T_N+1),\min_{1 \leq k \leq N} \bigg( \sum_{i \neq k } T_i+T_k(1-p(C))+\alpha p(C)\mathbb{E}_{A}J^{*}(min\{E\nonumber\\
&&-E_{p}-E_{s}+A,B\},T_1+1,T_2+1, \cdots, T_k'=1,T_{k+1}+1,\cdots,\nonumber\\
&&T_N+1)+\alpha (1-p(C))\mathbb{E}_{A}J^{*}(min\{E-E_{p}-E_{s}+A,B\},T_1+1,\nonumber\\ 
&&T_2+1,\cdots, T_N+1)\bigg) \}\nonumber\\
&&J^{*}(E < E_{p}+E_{s},T_1, T_2,\cdots, T_N)\nonumber\\
&=&\sum_{i=1}^N T_i+\alpha \mathbb{E}_{A}J^{*}(min\{E+A,B\},T_1+1, T_2+1, \cdots,T_N+1) 
\end{eqnarray}
\normalsize

The first expression in the minimization in the R.H.S. of the first equation in \eqref{eqn:Bellman-eqn-single-sensor-multi-process-with-fading-general} is the cost of not probing channel state ($b(t)=0$), which includes single-stage AoI cost $\sum_{i=1}^N T_i$ and an $\alpha$ discounted future cost with a random next state $( \min\{E+A,B\},T_1+1, T_2+1, \cdots,T_N+1 )$, averaged over the distribution of the number of energy packet generation $A$. The quantity $V^{*}(E,T_1, T_2,\cdots, T_N)$ is the optimal expected cost of probing the channel state, which explains the second equation in
\eqref{eqn:Bellman-eqn-single-sensor-multi-process-with-fading-general}. At an intermediate state $(E,T_1, T_2,\cdots, T_N,C)$, if $a(t)=0$,
 a single stage AoI cost $\sum_{i=1}^N T_i$ is incurred and the next state becomes $(\min\{E-E_{p}+A,B\},T_1+1, T_2+1, \cdots,T_N+1)$; if $a(t)=k$, the expected AoI cost is $\sum_{i \neq k } T_i+T_k(1-p(C))$ (expectation taken over the packet success probability $p(C)$), and the next random state becomes $(\min\{E-E_{p}-E_{s}+A,B\},T_1+1,T_2+1, \cdots, T_k'=1,  T_{k+1}+1,\cdots,T_N+1)$ and $(\min\{E-E_{p}-E_{s}+A,B\},T_1+1,T_2+1, \cdots, T_N+1)$ if $r(t)=1$ and $r(t)=0$, respectively. The last equation in \eqref{eqn:Bellman-eqn-single-sensor-multi-process-with-fading-general} follows similarly since $b(t)=0, a(t)=0$ is the only possible action when $E<E_p+E_s$.

Substituting the value of $V^{*}(E,T_1, T_2,\cdots, T_N)$ in the first equation of \eqref{eqn:Bellman-eqn-single-sensor-multi-process-with-fading-general}, we obtain the Bellman equations
\eqref{eqn:Bellman-eqn-single-sensor-multiple-process-with-fading}.

\begin{figure*}
\scriptsize
\begin{eqnarray}\label{eqn:Bellman-eqn-single-sensor-multiple-process-with-fading}
J^{*}(E \geq E_{p}+E_{s},T_1, T_2,\cdots, T_N)&=&min \bigg\{\sum_{i=1}^N T_i+\alpha \mathbb{E}_{A}J^{*}(min\{E+A,B\},T_1+1, T_2+1, \cdots,T_N+1), \nonumber\\
&&\mathbb{E}_{C} \bigg( min\{\sum_{i=1}^N T_i+\alpha \mathbb{E}_{A}J^{*}(min\{E-E_{p}+A,B\},T_1+1, T_2+1, \cdots, T_N+1),\nonumber\\ 
&&\min_{1 \leq k \leq N} \bigg( \sum_{i \neq k } T_i+T_k(1-p(C))+\alpha p(C)\mathbb{E}_{A}J^{*}(min\{E-E_{p}-E_{s}+A,B\},T_1+1,\nonumber\\
&&T_2+1, \cdots, T_k'=1,T_{k+1}+1,\cdots,T_N+1)+\alpha (1-p(C))\mathbb{E}_{A}J^{*}(min\{E-E_{p}-E_{s}+A,B\},\nonumber\\
&&T_1+1,T_2+1, \cdots, T_N+1) \bigg)\} \bigg) \bigg\} \nonumber\\
J^{*}(E < E_{p}+E_{s},T_1, T_2,\cdots, T_N)&=&\sum_{i=1}^N T_i+\alpha \mathbb{E}_{A}J^{*}(min\{E+A,B\},T_1+1, T_2+1, \cdots,T_N+1) 
\end{eqnarray}
\hrule
\end{figure*}
\normalsize

\subsubsection{Policy structure}\label{subsubsection:Policy structure-IID-System-N>1}

\begin{lemma}\label{lemma:single-sensor-multiple-process-J-increasing-in-T-decreasing-in-p(C^')}
 For $N>1$, the value function $J^*(E, T_1,T_2, \cdots, T_N)$ is increasing in each of $T_1, T_2, \cdots, T_N $ and $W^{*}(E,T_1, T_2,\cdots, T_N,C)$ is decreasing in $p(C)$. 
 \end{lemma}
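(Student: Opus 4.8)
The plan is to mirror the argument for the single-process case (Lemma~\ref{lemma:single-sensor-single-process-J-decreasing-in-p}): establish both claims by induction on the value-iteration iterates $J^{(t)}$ and $W^{(t)}$ generated by the Bellman operator in \eqref{eqn:Bellman-eqn-single-sensor-multi-process-with-fading-general}--\eqref{eqn:Bellman-eqn-single-sensor-multiple-process-with-fading}, and then pass to the limit using the convergence of value iteration (the multi-process analogue of Proposition~\ref{proposition:convergence-of-value-function-J}). Since monotonicity and anti-monotonicity in a coordinate are preserved under pointwise limits, it suffices to show that one application of the value-iteration map preserves them.

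First I would treat monotonicity in the ages. Take $J^{(0)}\equiv 0$ and assume inductively that $J^{(t)}(E,T_1,\ldots,T_N)$ is increasing in each coordinate. Fixing an index $j$, I would inspect every branch of the nested minima in \eqref{eqn:Bellman-eqn-single-sensor-multiple-process-with-fading}. The no-probe branch and the $a=0$ branch contribute $\sum_i T_i$ (increasing in $T_j$) plus a discounted future term whose age arguments are all of the form $T_i+1$, hence increasing in $T_j$ by the hypothesis. For the sampling branch with action $a=k$: if $k\neq j$, then $T_j$ appears both in $\sum_{i\neq k}T_i$ and, through the coordinate $T_j+1$, inside every future value function, so the branch is increasing in $T_j$; if $k=j$, the immediate cost contributes $T_j(1-p(C))$ (increasing in $T_j$), the success branch resets $T_j'=1$ (constant in $T_j$), and the failure branch carries $T_j+1$ (increasing by hypothesis), so this branch is again increasing in $T_j$. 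A minimum of functions increasing in $T_j$ is increasing in $T_j$, and $\mathbb{E}_C$, $\mathbb{E}_A$ preserve monotonicity; hence $J^{(t+1)}$ is increasing in $T_j$, and as $j$ was arbitrary the property is preserved.

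Next, for the dependence on $p(C)$, I would analyze $W^{(t+1)}$, which is built from $J^{(t)}$. The $a=0$ branch of $W$ in \eqref{eqn:Bellman-eqn-single-sensor-multi-process-with-fading-general} does not involve $p(C)$. For the action-$k$ branch, collecting the $p(C)$-dependent terms rewrites it as an affine function $\sum_i T_i+\alpha h_k+p(C)\,[\alpha(g_k-h_k)-T_k]$, where $g_k\doteq\mathbb{E}_A J^{(t)}(\cdots,T_k'=1,\cdots)$ is the post-success value (age $k$ reset to $1$) and $h_k\doteq\mathbb{E}_A J^{(t)}(\cdots,T_k+1,\cdots)$ is the post-failure value. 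Because $J^{(t)}$ is increasing in its $k$-th age coordinate and $1\le T_k+1$, we have $g_k\le h_k$, so the slope $\alpha(g_k-h_k)-T_k\le -T_k\le 0$; thus each action-$k$ branch is decreasing in $p(C)$. A minimum of functions that are decreasing in $p(C)$ (including the $p(C)$-free $a=0$ branch) is itself decreasing, so $W^{(t+1)}$ is decreasing in $p(C)$, and the property passes to the limit $W^*$.

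The main obstacle is not a single sharp estimate but the bookkeeping introduced by the inner minimization $\min_{1\le k\le N}$ over which process to sample: when verifying monotonicity in a fixed $T_j$ one must separately handle the sampled index $k=j$ (where the reset coordinate $T_j'=1$ severs the dependence on $T_j$ in the success branch) and the indices $k\neq j$, and confirm that every branch stays increasing so that the outer and inner minima are preserved. The only genuine coupling between the two claims is that the $p(C)$-monotonicity of $W$ relies on the age-monotonicity of $J$ through the inequality $g_k\le h_k$; establishing the age-monotonicity first, as above, resolves this cleanly.
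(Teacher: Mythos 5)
Your proposal is correct and follows essentially the same route as the paper: induction on the value-iteration iterates to show each branch of the nested minima is increasing in every age coordinate, followed by the affine-in-$p(C)$ decomposition of the sampling branch with slope $\alpha(g_k-h_k)-T_k\le 0$, exactly as the paper does (the paper defers the second part to the single-process argument of Lemma~\ref{lemma:single-sensor-single-process-J-decreasing-in-p}). Your explicit case split between the sampled index $k=j$ and $k\neq j$ is a more careful spelling-out of a step the paper treats tersely, but it is the same argument.
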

 
\begin{proof} See Appendix ~\ref{appendix:proof-of-lemma-single-sensor-multiple-process-fading-cost-increasing-in-T-decrease-in-p}.
\end{proof}
Let us define $\bm{T}=[T_1, T_2,\cdots, T_N]$ and $\bm{T}_{-k}=[T_1, T_2,\cdots, T_{k-1}, T_{k+1},\cdots, T_N]$. 

 \vspace{-6pt}
\begin{conjecture}\label{theorem:single-sensor-multiple-process-with-fading-policy-structure}
 For $N>1$, the optimal probing policy for the $\alpha$-discounted AoI cost minimization problem is a threshold policy on $ \max_{1 \leq k \leq N}T_{k} \doteq T_{k^*}$. For any $E \geq E_{p}+E_{s} $, the optimal action is to probe the channel state if and only if $ \max_{1 \leq k \leq N}T_{k}\geq T_{th}(E,\bm{T}_{-k^*})$ for a threshold function $T_{th}(E,\bm{T}_{-k^*})$ of $E$ and $\bm{T}_{-k^*}$.

 \end{conjecture}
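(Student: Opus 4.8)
The plan is to reproduce, inside the multi-process Bellman system \eqref{eqn:Bellman-eqn-single-sensor-multi-process-with-fading-general}, the threshold argument underlying the single-process probing structure. Fix an energy level $E \geq E_{p}+E_{s}$ and the sub-vector $\bm{T}_{-k^*}$ of non-maximal ages, and regard the largest age $T_{k^*}=\max_{1 \leq k \leq N}T_{k}$ as a scalar decision variable. Define the \emph{probing advantage}
\begin{equation*}
D(E,\bm{T}) \doteq \Big( \sum_{i=1}^N T_i + \alpha \mathbb{E}_{A}J^{*}(\min\{E+A,B\},T_1+1,\dots,T_N+1) \Big) - V^{*}(E,\bm{T}),
\end{equation*}
so that by \eqref{eqn:Bellman-eqn-single-sensor-multi-process-with-fading-general} it is optimal to probe at $(E,\bm{T})$ iff $D(E,\bm{T}) \geq 0$. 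The statement then reduces to the claim that, with $\bm{T}_{-k^*}$ frozen, $D$ is nondecreasing in $T_{k^*}$: the probing region becomes an up-set $\{T_{k^*} \geq T_{th}(E,\bm{T}_{-k^*})\}$, which is exactly the asserted threshold.

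\textbf{Key steps.} First I would invoke Proposition~\ref{proposition:convergence-of-value-function-J} in order to work with the value-iteration iterates $J^{(n)}$ and argue by induction on $n$, since any structural property preserved by the Bellman operator passes to the limit $J^{*}$. Second, I would use Lemma~\ref{lemma:single-sensor-multiple-process-J-increasing-in-T-decreasing-in-p(C^')} (monotonicity of $J^{*}$ in each age and of $W^{*}$ in $p(C)$) as the base monotonicity input. The crucial intermediate step is to show that, within the inner minimization of $W^{*}$ in \eqref{eqn:Bellman-eqn-single-sensor-multi-process-with-fading-general}, the process selected for sampling is the one of maximal age, i.e. $\arg\min_{k}(\cdot)=k^*$. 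For the single-stage term this is immediate because $\sum_{i \neq k}T_i+T_k(1-p(C))=\sum_i T_i - p(C)\,T_k$ is minimized by the largest $T_k$; granting the analogous preference for the discounted future term, the probe branch reduces the sum-age by $p(C)\,T_{k^*}$ and, upon success, collapses $T_{k^*}$ to $1$, whereas the no-probe branch lets $T_{k^*}$ grow unabated. Comparing the discrete increment of the two branches in $T_{k^*}$ would then give $D(E,\dots,T_{k^*}+1,\dots)-D(E,\dots,T_{k^*},\dots)\geq 0$, and hence the threshold.

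\textbf{Main obstacle.} The hard part is establishing that sampling the maximal-age process is optimal inside $W^{*}$ \emph{together} with the required monotonicity of the future-cost increments --- essentially a joint monotonicity/supermodularity of $J^{*}$ in $(T_{k^*},\text{action})$ that must be propagated through value iteration. Because the sampling reset acts on a single, index-dependent coordinate and the argmin over $k$ can in principle shift as $T_{k^*}$ varies, a clean inductive proof of this coordinate-selection property appears delicate; this is presumably why the statement is posed as a conjecture and validated numerically in Section~\ref{section:numerical-work}. A promising route is to first prove a componentwise convexity/supermodularity of $J^{*}$ (so that the marginal value of reducing an age is increasing in that age), which would simultaneously pin down $k^*$ as the argmin and yield the monotonicity of $D$; verifying that this property survives the Bellman operator --- through the expectation over $C$, the outer $\min$, and the inner $\min_{k}$ --- is the step I expect to be the bottleneck.
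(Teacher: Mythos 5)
The paper does not prove this statement: it is posed explicitly as a conjecture and is only validated numerically (Section~\ref{section:numerical-work}, Figure~\ref{fig2.1}(a)); even its single-process analogue, Conjecture~\ref{conjecture:single-sensor-single-process-with-fading-policy-structure}, is left unproven. So there is no paper proof to compare against, and your proposal should be judged on its own as an attempted proof. As such, it is a plan rather than a proof, and you have correctly located the gap yourself. Your reformulation --- probe iff the probing advantage $D(E,\bm{T})\geq 0$, and the threshold structure is equivalent to $D$ being nondecreasing in $T_{k^*}$ with $\bm{T}_{-k^*}$ frozen --- is the right one. But the inputs you propose to use cannot close it: Lemma~\ref{lemma:single-sensor-multiple-process-J-increasing-in-T-decreasing-in-p(C^')} gives only that $J^*$ is increasing in each age, which makes \emph{both} branches of the outer minimization increasing in $T_{k^*}$ and says nothing about which increment is larger. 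What is needed is an increasing-differences (supermodularity) property of the form
\begin{equation*}
\mathbb{E}_{A}J^{*}(\cdot,\,T_{k^*}+1 \text{ in coordinate } k^*)-\mathbb{E}_{A}J^{*}(\cdot,\,1 \text{ in coordinate } k^*)\ \text{nondecreasing in } T_{k^*},
\end{equation*}
uniformly enough to survive the expectation over $C$, the inner $\min_k$, and the outer $\min$. No such property is established anywhere in the paper, and propagating it through value iteration is precisely the open difficulty; your sketch assumes it ("granting the analogous preference for the discounted future term") rather than proving it.

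One point where your plan is more pessimistic than necessary: the sub-claim that the inner $\arg\min_k$ in $W^*$ is the maximal-age process is not part of the open problem --- the paper proves it in Theorem~\ref{theorem:single-sensor-multiple-process-with-fading-policy-structure} (Appendix~\ref{appendix:proof-of-theorem-single-sensor-multiple-process-fading-threshold-policy}) using permutation invariance of $J^*$ together with the monotonicity lemma, so the argmin does not shift as $T_{k^*}$ varies and you may take $k^*$ as fixed. The genuine bottleneck is solely the convexity/supermodularity of $J^*$ in the age coordinates that you flag in your last paragraph; until that is proved (and shown to be preserved by the two-stage Bellman operator), the statement remains a conjecture, consistent with how the paper treats it.
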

 We next show that sampling the process with the largest AoI is optimal.
 
\begin{theorem}\label{theorem:single-sensor-multiple-process-with-fading-policy-structure}
 For $N>1$, after probing the channel state, the optimal source activation policy for the $\alpha$-discounted cost problem is a threshold policy on p(C). For any $E \geq E_{p}+E_{s} $ and probed channel state, the optimal action is to sample the process $\arg \max_{1 \leq k \leq N} T_k$ if and only if $p(C) \geq p_{th}(E, \bm{T})$ for a threshold function $p_{th}(E, \bm{T})$ of $(E, \bm{T})$.
 
\end{theorem}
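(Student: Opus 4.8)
The plan is to establish the two assertions of the theorem separately: first that, conditioned on sampling, the optimal process to sample is the one with the largest instantaneous age $k^{*}=\arg\max_{1\le k\le N}T_{k}$, and second that the decision to sample $k^{*}$ versus staying idle is governed by a threshold on $p(C)$. Throughout I work with the intermediate value function $W^{*}(E,\bm{T},C)$ in \eqref{eqn:Bellman-eqn-single-sensor-multi-process-with-fading-general} and exploit the monotonicity supplied by Lemma~\ref{lemma:single-sensor-multiple-process-J-increasing-in-T-decreasing-in-p(C^')}. A preliminary step I would carry out is to show that $J^{*}$ is \emph{symmetric} (permutation-invariant) in its age arguments $(T_{1},\dots,T_{N})$: the single-stage cost $\sum_{i}T_{i}$, the age increment $T_{i}\mapsto T_{i}+1$, and the inner minimization $\min_{1\le k\le N}$ in \eqref{eqn:Bellman-eqn-single-sensor-multiple-process-with-fading} all treat the coordinates symmetrically, so the Bellman operator maps symmetric functions to symmetric functions. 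Starting from the symmetric iterate $J^{(0)}\equiv 0$ and invoking Proposition~\ref{proposition:convergence-of-value-function-J}, the limit $J^{*}$ is symmetric.

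For the first assertion, I would compare the sampling costs of two processes $j$ and $k$ with $T_{k}>T_{j}$. Writing $S=\sum_{i}T_{i}$, the immediate cost $\sum_{i\ne k}T_{i}+T_{k}(1-p(C))=S-T_{k}\,p(C)$ strictly favours the larger age, while the failure continuation $\mathbb{E}_{A}J^{*}(\min\{E-E_{p}-E_{s}+A,B\},\bm{T}+\bm{1})$ is identical for every choice of sampled process and hence cancels. The only remaining $k$-dependent term is the success continuation, in which the sampled coordinate is reset to $1$. Using symmetry of $J^{*}$ to swap coordinates $j$ and $k$, the success continuations for sampling $k$ and for sampling $j$ differ in exactly one coordinate ($T_{j}+1$ versus $T_{k}+1$); since $T_{j}+1<T_{k}+1$, the coordinatewise monotonicity in Lemma~\ref{lemma:single-sensor-multiple-process-J-increasing-in-T-decreasing-in-p(C^')} shows the success continuation is also minimized by sampling $k$. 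Thus every $k$-dependent term is minimized at $k^{*}=\arg\max_{k}T_{k}$, so sampling the oldest process is optimal.

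For the threshold assertion, having fixed $k^{*}$, I would subtract the idle cost (the $a(t)=0$ branch in $W^{*}$) from the sampling cost. Collecting terms, the difference is \emph{affine} in $p(C)$, with the coefficient of $p(C)$ equal to $-T_{k^{*}}+\alpha\bigl(F_{\mathrm{succ}}-F_{\mathrm{fail}}\bigr)$, where $F_{\mathrm{succ}}$ and $F_{\mathrm{fail}}$ denote the success and failure continuation values. Since $F_{\mathrm{succ}}$ has coordinate $k^{*}$ reset to $1$ while $F_{\mathrm{fail}}$ retains $T_{k^{*}}+1$ there and all other coordinates agree, monotonicity of $J^{*}$ gives $F_{\mathrm{succ}}\le F_{\mathrm{fail}}$, so the coefficient is strictly negative (at most $-T_{k^{*}}<0$ for $T_{k^{*}}>0$). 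Hence the sampling-minus-idle cost is a strictly decreasing affine function of $p(C)$, producing a unique crossing value $p_{th}(E,\bm{T})$ that depends only on $E$ and $\bm{T}$ and not on $p(C)$, above which sampling $k^{*}$ is optimal; that is, sample $k^{*}$ if and only if $p(C)\ge p_{th}(E,\bm{T})$.

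The main obstacle I anticipate is the first step: rigorously justifying that sampling the oldest process is optimal genuinely requires coupling the permutation symmetry of $J^{*}$ with its coordinatewise monotonicity, rather than monotonicity alone, and the symmetry itself must be argued through the value-iteration recursion. Once that coupling is secured, the threshold-on-$p(C)$ part reduces to the same affine-comparison argument already used for the single-process case in Theorem~\ref{theorem:single-sensor-single-process-with-fading-policy-structure}.
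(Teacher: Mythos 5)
Your proposal is correct and follows essentially the same route as the paper's proof: permutation invariance of $J^{*}$ combined with the coordinatewise monotonicity of Lemma~\ref{lemma:single-sensor-multiple-process-J-increasing-in-T-decreasing-in-p(C^')} to show that the oldest process is the best candidate, followed by the observation that the sampling-minus-idle cost is affine and decreasing in $p(C)$. The only difference is one of detail: you spell out the value-iteration argument for symmetry and the coordinate-swap comparison of the success continuations, which the paper asserts as ``obvious,'' so your write-up is, if anything, more complete.
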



\begin{proof} See Appendix ~\ref{appendix:proof-of-theorem-single-sensor-multiple-process-fading-threshold-policy}.
\end{proof}

 We will later numerically demonstrate some intuitive properties of $T_{th}(E,\bm{T}_{-k})$ as a function of $E$, $\bm{T}_{-k}$ and $\lambda$ and $p_{th}(E,T_1,T_2,\cdots,T_N)$ as a function of $E$, $(T_1,T_2,\cdots,T_N)$ and $\lambda$ in Section~\ref{section:numerical-work}.

  \vspace{-8pt}
 
 \section{Markovian System Model: single process }\label{section:Markovian channel and Markovian energy}
 
Here, we derive the optimal channel probing, source activation, and data transmission policy for an EH source sampling a single process with a Markovian channel and Markovian energy arrival process. 
We formulate the $\alpha$-discounted cost version of \eqref{eqn:main-problem} as an MDP with
state space $\mathcal{S} \doteq \{0,1,\cdots,B\} \times \mathbb{Z}_+ \times \mathbb{Z}_+ \times \{C_{1}, C_{2},\cdots, C_{m}\} \times \{H_{1}, H_{2}\}$ where a generic primary  state  
 $s = (E,T,\tau, C_{prev}, H_{1} )$ means that the energy buffer has $E$ energy packets, the last successfully received packet was generated at the source node $T$ slots ago, the channel state was last probed $\tau$ slots ago where $\tau \leq T$, the previous channel state obtained via probing is $C_{prev} \in \{C_1,C_2,\cdots,C_m\}$, and the EH source is in harvesting state $H_{1}$. A generic primary state $s = (E,T,\tau, C_{prev}, H_{2})$ similarly means that the EH source is in non-harvesting state $H_{2}$. Also, $\mathcal{V}$ denotes intermediate state space where a generic intermediate  state $v = (E,T, C, H_{1} )$ or $v = (E,T, C, H_{2} )$ is used for current channel state $C$ (obtained after probing). 
 The action space $\mathcal{A}=\{\{0,0\}\cup \{1\times\{0,1 \}\}\}$ with $a(t), b(t) \in \{0,1\}$. At each time, if $(b(t) = 0, a(t) = 0)$, then the expected single-stage AoI cost is $c(s(t), b(t), a(t)) = T$. However, if $(b(t) = 1)$, the expected single-stage AoI cost is $c(s(t), b(t)=1, a(t)=0) = T$, and $c(s(t), b(t)=1, a(t)=1) = T(1-p(C))$. 

 \vspace{-6pt}
\subsection{Optimality equation}
In this case, for  primary  state $(E,T,\tau, C_{prev}, H_{1})$, the Bellman equations are given by:

\scriptsize
\begin{eqnarray} \label{eqn:Bellman-eqn--single-process-with-markov}
&&J^{*}(E \geq E_{p}+E_{s},T,\tau, C_{prev}, H_{1})\nonumber\\
&=&min \bigg\{T+\alpha \mathbb{E}_{A|H_{1}, H_{next}|H_{1}}J^{*}(min\{E+A,B\},T+1, \tau +1, \nonumber\\
&& C_{prev},H_{next}), V^{*}(E,T,\tau,C_{prev}, H_{1}) \bigg\}\nonumber\\
&&V^{*}(E,T,\tau, C_{prev}, H_{1})\nonumber\\
&=& \mathbb{E}_{C\sim q^{(\tau)} (\cdot|C_{prev})} W^{*}(E,T, C, H_{1})\nonumber\\
&& W^{*}(E,T, C, H_{1})\nonumber\\
&=&min\{T+\alpha \mathbb{E}_{A|H_{1},H_{next}|H_{1}}J^{*}(min\{E-E_{p}+A,B\},T+1,1, C,\nonumber\\ 
&&H_{next}),T(1-p(C))+\alpha p(C)\mathbb{E}_{A|H_{1},H_{next}|H_{1}}J^{*}(min\{E-E_{p}-\nonumber\\
&&E_{s}+A,B\},1,1, C,H_{next})+\alpha (1-p(C))\mathbb{E}_{A|H_{1},H_{next}|H_{1}}J^{*}(\nonumber\\
&&min\{E- E_{p}-E_{s}+A,B\},T+1,1, C,H_{next})\}\nonumber\\
&&J^{*}(E< E_{p}+E_{s},T,\tau, C_{prev}, H_{1})\nonumber\\
&=&T+\alpha \mathbb{E}_{A|H_{1}, H_{next}|H_{1}}J^{*}(min\{E+A,B\},T+1,\tau +1, C_{prev},\nonumber\\
&&H_{next} )
\end{eqnarray}
\normalsize

The first expression in the minimization in the R.H.S. of the first equation in \eqref{eqn:Bellman-eqn--single-process-with-markov} is the cost when $b(t)=0$. The quantity $V^{*}(E,T,\tau,C_{prev}, H_{1})$ is the expected cost when $b(t)=1$, which explains the second equation in \eqref{eqn:Bellman-eqn--single-process-with-markov}. Here $q^{(\tau)}$ denote the $\tau$-step transition probability of the channel dynamics, starting from $C_{prev}$.

Similarly, in \eqref{eqn:Bellman-eqn--single-process-with-markov}, if we consider $H_{2}$ for primary  state $(E,T,\tau, C_{prev}, H_{2})$ then the same Bellman equations hold with $A=0$. 


\subsection{Policy structure}\label{subsection:policy structure_markov}
We conjecture the following properties regarding the value function and the optimal policy. These conjectures are supported by numerical results which are given in \ref{subsection:markovian-sensor-single-process}.
\begin{conjecture} \label{conjecture:single-sensor-single-process-markov-value-function}
For the Markovian system model and $N=1$, the value function 
$J^{*}(E, T,\tau, C_{prev},H )$ is increasing in $T$ and $W^{*}(E,T, C,H)$ is decreasing in $p(C)$. 
\end{conjecture}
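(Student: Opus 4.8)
The plan is to mirror the argument that established the i.i.d. case in Lemma~\ref{lemma:single-sensor-single-process-J-decreasing-in-p}, but to carry it out through value iteration on the Markovian Bellman system \eqref{eqn:Bellman-eqn--single-process-with-markov}. First I would establish the analogue of Proposition~\ref{proposition:convergence-of-value-function-J} for this two-stage Markovian MDP, namely that value iteration converges, $J^{(t)} \to J^*$ and $W^{(t)} \to W^*$ pointwise: the one-stage Bellman operator defined by the right-hand side of \eqref{eqn:Bellman-eqn--single-process-with-markov} remains an $\alpha$-contraction after the state is augmented by $(\tau, C_{prev}, H)$ and the channel expectation is taken under the $\tau$-step kernel $q^{(\tau)}(\cdot|C_{prev})$, so the same reasoning carries over. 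Since a pointwise limit of increasing (resp. decreasing) functions is increasing (resp. decreasing), it then suffices to prove both claims for every iterate and pass to the limit.

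For monotonicity of $J^*$ in $T$, I would induct on $t$ with the trivial base case $J^{(0)}\equiv 0$. Fixing $(E,\tau,C_{prev},H)$ and comparing $T_1 \le T_2$ (the admissibility constraint $\tau \le T$ is preserved when $T$ increases), the standard device is to show that the cost-to-go of each of the three action branches — not probing, probing without sampling, and probing with sampling — is increasing in $T$, so that both the outer minimization defining $J^{(t+1)}$ and the inner minimization defining $W^{(t+1)}$ preserve monotonicity (a minimum of functions each increasing in $T$ is increasing in $T$). Concretely: the not-probe branch contributes single-stage cost $T$ plus $\alpha \mathbb{E} J^{(t)}(\cdot,T+1,\tau+1,C_{prev},H_{next})$, both increasing in $T$ by hypothesis; the probe-without-sample branch is analogous with $\tau$ reset to $1$; and in the probe-and-sample branch the term $T(1-p(C))$ is increasing in $T$ since $p(C)\le 1$, the successful-delivery term resets the age to $1$ and is constant in $T$, while the failed-delivery term carries $T+1$ and is increasing by hypothesis. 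Hence $W^{(t+1)}$, $V^{(t+1)}=\mathbb{E}_{C\sim q^{(\tau)}(\cdot|C_{prev})}W^{(t+1)}$ and $J^{(t+1)}$ are all increasing in $T$; the non-harvesting states $H_2$ follow identically with $A\equiv 0$.

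With $J^*$ increasing in $T$ in hand, the claim that $W^*(E,T,C,H)$ is decreasing in $p(C)$ follows by inspecting the inner minimization in \eqref{eqn:Bellman-eqn--single-process-with-markov}. The probe-without-sample branch does not involve $p(C)$. Writing $G_1 \doteq \mathbb{E} J^*(\min\{E-E_p-E_s+A,B\},1,1,C,H_{next})$ and $G_2 \doteq \mathbb{E} J^*(\min\{E-E_p-E_s+A,B\},T+1,1,C,H_{next})$, a short rearrangement shows the probe-and-sample branch equals $(T+\alpha G_2) - p(C)\,(T+\alpha(G_2-G_1))$. Monotonicity of $J^*$ in $T$ gives $G_1 \le G_2$, so the coefficient multiplying $p(C)$ is nonpositive and this branch is decreasing in $p(C)$; a minimum of a $p(C)$-constant term and a $p(C)$-decreasing term is decreasing in $p(C)$, which yields the claim.

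The main obstacle I anticipate lies in the first step rather than in the monotonicity bookkeeping: making value-iteration convergence fully rigorous for this two-stage Markovian MDP is more delicate than in the i.i.d. setting, because the augmented coordinates $(\tau,C_{prev})$ and the $\tau$-step kernel $q^{(\tau)}$ enter the dynamics and the single-stage cost $T$ is unbounded, so the contraction must be handled in an appropriate (weighted) norm. This is presumably why the statement is posed as a conjecture and verified numerically; everything downstream is a routine induction once convergence and the interchange of the limit with the (weak) monotonicity are justified.
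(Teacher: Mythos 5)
The paper contains no proof of this statement: it is posed as a conjecture and only validated numerically in Section~\ref{subsection:markovian-sensor-single-process}, so there is nothing on the paper's side to compare against. Your treatment of the first claim (monotonicity of $J^{*}$ in $T$) is an essentially correct port of the induction behind Lemma~\ref{lemma:single-sensor-single-process-J-decreasing-in-p}: $T$ enters every branch only through the single-stage cost and the age coordinate of the successor state, the channel and harvesting coordinates are unaffected by $T$, so a minimum of increasing branches stays increasing; the convergence caveat you raise (unbounded cost, weighted-norm contraction) is real but is equally present in the paper's own Proposition~\ref{proposition:convergence-of-value-function-J}.

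The second claim is where your argument has a genuine gap. In the i.i.d. case the continuation values and the probe-without-sample branch really are independent of the probed state $C$, because the successor state $(E',T')$ carries no channel coordinate. In the Markovian model the probed state $C$ becomes the $C_{prev}$ coordinate of every successor state: your $G_1=\mathbb{E}J^{*}(\cdot,1,1,C,H_{next})$, $G_2=\mathbb{E}J^{*}(\cdot,T+1,1,C,H_{next})$, and the no-sample branch $T+\alpha\mathbb{E}J^{*}(\cdot,T+1,1,C,H_{next})$ all depend on $C$ through the future channel law $q(\cdot|C)$. Comparing $W^{*}(E,T,C_i,H)$ with $W^{*}(E,T,C_j,H)$ for $p(C_i)\le p(C_j)$ is therefore not a comparison of one fixed function at two values of $p$: the ``constant'' branch and the coefficients in your rearrangement $(T+\alpha G_2)-p(C)\,\bigl(T+\alpha(G_2-G_1)\bigr)$ both change with the channel index, and your decomposition only establishes monotonicity in $p$ for \emph{fixed} continuation values. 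Closing this would require an extra structural hypothesis --- e.g., stochastic monotonicity of the channel transition matrix plus an induction showing $J^{*}$ is suitably monotone in $C_{prev}$ --- which the paper does not assume. This coupling between the current channel state and the future channel distribution is exactly what distinguishes the Markovian setting from Lemma~\ref{lemma:single-sensor-single-process-J-decreasing-in-p}, and is presumably why the authors leave the statement as a conjecture.
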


\begin{conjecture}\label{conjecture:single-sensor-single-process-markov}
 For the Markovian system model and $N=1$, the optimal probing policy for the $\alpha$-discounted AoI cost minimization problem is a threshold policy on $T$. For any $E \geq E_{p}+E_{s}$, the optimal action is to probe the channel state if and only if $T\geq T_{th}(E,\tau, C_{prev},H)$ for a threshold function $T_{th}(E,\tau,C_{prev},H)$ of $E,\tau,C_{prev},H$. 
\end{conjecture}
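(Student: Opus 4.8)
The plan is to mirror the argument used for the i.i.d. probing threshold (Conjecture~\ref{conjecture:single-sensor-single-process-with-fading-policy-structure} and Theorem~\ref{theorem:single-sensor-single-process-with-fading-policy-structure}), adapting it to carry the extra information state $(\tau, C_{prev})$ and the harvesting state $H$. As usual, I would work with the $\alpha$-discounted value iteration $J^{(t)} \to J^*$, whose convergence is guaranteed by the Markovian analogue of Proposition~\ref{proposition:convergence-of-value-function-J}, and prove every structural property as an invariant preserved by one value-iteration sweep, then pass to the limit.

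First I would establish the monotonicities of Conjecture~\ref{conjecture:single-sensor-single-process-markov-value-function}: by induction on $t$, assuming $J^{(t)}(E,\cdot,\tau,C_{prev},H)$ is non-decreasing in $T$, each of the not-probe, probe-but-not-sample, and probe-and-sample expressions in \eqref{eqn:Bellman-eqn--single-process-with-markov} is non-decreasing in $T$ (the sampling branch because the coefficient $(1-p(C))$ and the AoI reset only dampen, never reverse, the growth in $T$), so the pointwise minimum $J^{(t+1)}$ is too; that $W^*$ is decreasing in $p(C)$ follows as in Lemma~\ref{lemma:single-sensor-single-process-J-decreasing-in-p} by comparing the sample and idle branches term by term. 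In addition I would prove the auxiliary monotonicity that $J^*$ is non-decreasing in $\tau$, i.e. staler channel knowledge never helps; this is needed below and has no counterpart in the i.i.d. case.

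Next, writing $g_{np}$ and $g_p = V^*$ for the not-probe and probe costs in the first line of \eqref{eqn:Bellman-eqn--single-process-with-markov}, the claim reduces to showing that the probing advantage $\Delta := g_{np} - g_p$ crosses zero at most once and from below as $T$ increases; since probing is optimal iff $\Delta \ge 0$, this yields the threshold $T_{th}(E,\tau,C_{prev},H)$. The natural sufficient condition is the increasing-difference inequality $g_{np}(T+1)-g_{np}(T) \ge g_p(T+1)-g_p(T)$: not probing lets the age grow unchecked, so its cost rises faster in $T$, whereas the probe cost rises more slowly because with positive probability the source samples and resets the age. I would prove this by the same term-by-term comparison as above, using the $T$-monotonicity of $J^*$ together with a one-step coupling of the energy and harvesting transitions so that the $\alpha$-discounted future terms can be compared at matched $(A,H_{next})$.

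The hard part is the information-refresh coupling that distinguishes the Markovian case from the i.i.d. one. When the source probes, the state $(\tau, C_{prev})$ is reset to $(1, C)$, so probing carries an informational value on top of the immediate sampling gain, and this value depends on $C_{prev}$ and on $\tau$ in a way that does not factor out of the difference $\Delta$. Making the increasing-difference inequality rigorous therefore requires a joint supermodularity / increasing-differences property of $J^*$ in the pair $(T, (\tau, C_{prev}))$ --- essentially that the marginal cost of an extra unit of age is larger when the channel information is staler --- rather than the one-dimensional monotonicities alone. Propagating such a two-dimensional property through the value-iteration sweep, where probing mixes over the $\tau$-step transition $q^{(\tau)}(\cdot|C_{prev})$, is precisely the step that fails to close under the simple induction that works when the channel is redrawn i.i.d.; this is why the statement is recorded as a conjecture and validated numerically in Section~\ref{subsection:markovian-sensor-single-process} rather than proved.
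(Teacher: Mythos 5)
The paper offers no proof of this statement: it is recorded as a conjecture and supported only by the numerical experiments of Section~\ref{subsection:markovian-sensor-single-process}, so there is no authors' argument to compare yours against. Your roadmap --- $T$-monotonicity of $J^*$ by value-iteration induction, followed by a single-crossing / increasing-differences argument for the probing advantage $\Delta = g_{np}-g_{p}$ --- is the natural one, and you are right, and candid, that it does not close; to that extent your write-up is consistent with the paper's decision to leave this as a conjecture rather than a theorem.

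However, you misplace the obstruction. You assert that the increasing-difference inequality $g_{np}(T+1)-g_{np}(T)\ge g_{p}(T+1)-g_{p}(T)$ goes through ``by the same term-by-term comparison'' in the i.i.d.\ case, so that the only new difficulty is the Markovian information state $(\tau,C_{prev})$ being reset by probing. In fact the paper never proves the probing threshold on $T$ even for the i.i.d.\ model --- that is Conjecture~\ref{conjecture:single-sensor-single-process-with-fading-policy-structure}, likewise only validated numerically. The results that are actually proved (e.g.\ Theorem~\ref{theorem:single-sensor-single-process-with-fading-policy-structure}) concern the post-probe sampling decision as a function of $p(C)$, where single crossing is immediate because the sampling branch is affine and decreasing in $p(C)$ (via Lemma~\ref{lemma:single-sensor-single-process-J-decreasing-in-p}) while the idle branch is constant in $p(C)$. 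A threshold in $T$ is a genuinely second-order statement: it requires something like convexity of $J^*$ in $T$, or supermodularity of the state-action cost in $(T,b)$, and first-order monotonicity of $J^*$ in $T$ plus linearity of expectation does not deliver it --- both $g_{np}$ and $g_{p}$ are increasing in $T$, and nothing proved in the paper orders their increments. So the joint increasing-differences property in $(T,(\tau,C_{prev}))$ that you identify is a real additional difficulty, but it sits on top of an unproven base case rather than being the sole point where the i.i.d.\ argument fails to extend. One further small correction: there is no ``Markovian analogue of Proposition~\ref{proposition:convergence-of-value-function-J}'' stated in the paper, though the contraction argument there would indeed carry over to the enlarged state space.
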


\begin{conjecture}\label{theorem:single-sensor-single-process-with-fading-policy-structure_markov}
 For the Markovian system model and $N=1$, at any time, if the source decides to probe the channel, then the optimal sampling policy is a threshold policy on $p(C)$. For any $E \geq E_{p}+E_{s} $ and probed channel state, the optimal action is to sample the source node if and only if $p(C)\geq p_{th}(E,T,H)$ for a threshold function $p_{th}(E,T,H)$ of $E$, $T$ and $H$. 
\end{conjecture}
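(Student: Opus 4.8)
The plan is to mirror the argument used for the i.i.d.\ analogue, Theorem~\ref{theorem:single-sensor-single-process-with-fading-policy-structure}, working from the intermediate cost-to-go $W^{*}$ in \eqref{eqn:Bellman-eqn--single-process-with-markov} and comparing the two post-probe actions in state $(E,T,C,H)$. Writing $p=p(C)$, the sampling decision compares the ``do not sample'' value $A_{0}=T+\alpha\,\mathbb{E}\,J^{*}(\min\{E-E_{p}+A,B\},T+1,1,C,H_{next})$ against the ``sample'' value $A_{1}=T(1-p)+\alpha p\,\mathbb{E}\,J^{*}(\min\{E-E_{p}-E_{s}+A,B\},1,1,C,H_{next})+\alpha(1-p)\,\mathbb{E}\,J^{*}(\min\{E-E_{p}-E_{s}+A,B\},T+1,1,C,H_{next})$, and sampling is optimal iff $A_{0}-A_{1}\geq 0$. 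Denoting the three expected continuation terms by $V_{ns}$ (probe only, age $T+1$), $V_{succ}$ (sample, success, age $1$) and $V_{fail}$ (sample, failure, age $T+1$), a short rearrangement gives
\[
A_{0}-A_{1}=Tp+\alpha\bigl(V_{ns}-V_{fail}\bigr)+\alpha p\bigl(V_{fail}-V_{succ}\bigr).
\]
The threshold on $p$ would then follow by showing the right-hand side is nondecreasing in $p$, so that the ``sample'' region $\{p:A_{0}-A_{1}\geq 0\}$ is an up-set $\{p\geq p_{th}(E,T,H)\}$.

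The coefficient of $p$ in the identity is $T+\alpha\bigl(V_{fail}-V_{succ}\bigr)$. Since $V_{fail}$ and $V_{succ}$ differ only in their age argument ($T+1$ versus $1$), Conjecture~\ref{conjecture:single-sensor-single-process-markov-value-function} (monotonicity of $J^{*}$ in $T$) yields $V_{fail}\geq V_{succ}$, and together with $T\geq 0$ this coefficient is nonnegative; this is exactly the step that is immediate in the i.i.d.\ proof, where $V_{ns},V_{succ},V_{fail}$ do not depend on $C$ at all. To make the $T$-monotonicity input rigorous I would, as in the i.i.d.\ case, argue through value iteration: using the Markovian analogue of Proposition~\ref{proposition:convergence-of-value-function-J}, prove by induction on the iterate $n$ that $J^{(n)}(E,T,\tau,C_{prev},H)$ is increasing in $T$, each inductive step reducing to the observation that shifting every age argument upward can only raise the per-stage AoI cost and the discounted continuation, and then pass to the limit.

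The hard part, and the reason the statement is posed as a conjecture, is that in the Markovian model the probed state $C$ enters the continuation values $V_{ns},V_{succ},V_{fail}$ through the next-state component $C_{prev}=C$, not only through the scalar $p(C)$. Hence varying over the channel states $C_{1},\dots,C_{m}$ changes $p$ and the three $V$-terms simultaneously, so the clean ``affine-increasing-in-$p$'' structure no longer follows from $T$-monotonicity alone, and $A_{0}-A_{1}$ need not be monotone along the $p$-ordering of channel states. To close this gap one would need an additional stochastic-monotonicity hypothesis on the channel kernel $\{q_{i,j}\}$---for instance that states with larger $p$ transition to stochastically larger-$p$ states (a monotone, or MLR-ordered, chain)---paired with a companion monotonicity of $J^{*}$ in $C_{prev}$ (namely that $J^{*}$ is decreasing in $p(C_{prev})$, a better current channel making cheaper future sampling more likely). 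With both ingredients the implicit $C$-dependence of $V_{ns},V_{succ},V_{fail}$ could be controlled and the coefficient analysis above would extend uniformly across channel states, delivering the threshold $p_{th}(E,T,H)$. Establishing those two monotonicity properties in general (themselves plausible but unproven) is the genuine obstacle; absent it, I would rely on the numerical validation in Section~\ref{section:numerical-work}, as the paper does.
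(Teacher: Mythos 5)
The statement you were asked to prove is one the paper itself does \emph{not} prove: it is stated as a conjecture and supported only by the numerical experiments in Section~\ref{subsection:markovian-sensor-single-process}, so there is no proof of record to compare against. Your proposal is best read not as a proof but as a correct diagnosis of why the paper stops at a conjecture here, and as such it is accurate. The reduction to the sign of $A_{0}-A_{1}$, the identity $A_{0}-A_{1}=Tp+\alpha(V_{ns}-V_{fail})+\alpha p(V_{fail}-V_{succ})$, and the observation that in the i.i.d.\ case (Appendix~\ref{appendix:proof-of-theorem-single-sensor-single-process-fading-threshold-policy}) the threshold follows because $V_{ns},V_{succ},V_{fail}$ are independent of $C$ while the coefficient of $p$ is nonnegative by $T$-monotonicity, all match the paper's i.i.d.\ argument exactly. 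You also correctly identify the precise obstruction in the Markovian model: after probing, the next state carries $C_{prev}=C$, so the three continuation values vary with the probed state alongside $p(C)$, and $A_{0}-A_{1}$ is no longer affine in $p$ along the ordering of channel states. This is the real reason the threshold-in-$p(C)$ claim does not follow from $T$-monotonicity alone, and the paper offers nothing to close it.

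Two further remarks. First, your proposed route to an actual theorem --- a stochastically monotone (e.g.\ MLR-ordered) channel kernel together with monotonicity of $J^{*}$ in $p(C_{prev})$ --- is the standard way such results are rescued, and goes beyond anything in the paper; but note that both ingredients would themselves need proof, and the second is delicate because a better $C_{prev}$ lowers the cost of probing but also changes the probe/no-probe comparison, so the induction is not automatic. Second, you correctly flag that even the $T$-monotonicity input is only Conjecture~\ref{conjecture:single-sensor-single-process-markov-value-function} in the paper; your plan to establish it by the same value-iteration induction as Lemma~\ref{lemma:single-sensor-single-process-J-decreasing-in-p} should go through, since the channel and harvesting transitions do not depend on $T$ and $V_{succ}$ is $T$-free, so that piece is a genuine (if small) improvement over what the paper records. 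In short: no gap relative to the paper, because the paper proves nothing here; your write-up honestly delivers the partial argument and names the missing hypotheses.
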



 \vspace{-6pt}
\section{Reinforcement Learning for AoI Optimization:   Single Process}\label{section:RL}
 If the channel statistics and the energy harvesting characteristics are not known at the source, then the source has to learn them with time. We adapt the Q-learning technique \cite[Section 11.4.2]{s2013stochastic} to find the optimal policy for our two-stage action model.

 \vspace{-6pt} 
 \subsection{I.I.D. system}\label{IID system-RL}
 Here, we assume that the energy generation rate $\lambda$ and the channel state probabilities $\{q_j\}_{1 \leq j \leq m}$ are not known to the source. The packet success probabilities $\{p_j\}_{1 \leq j \leq m}$ are either known or unknown; the packet success indicator $r(t)$ is used to estimate $p_j$ if it is unknown.\\
 
  \vspace{-10pt}
 \subsubsection{Optimality equation in terms of Q-function}
 \label{section:Optimality equation for Q-fuction SSSP}
 
The optimal Q-function is denoted by $Q^{*}(E,T,b)$ for a generic primary state-action pair $(E,T,b)$, where the primary state is $(E,T) $ and the action is  $b \in \{0,1\}$. The quantity $Q^{*}(E,T,b)$ denotes the expected cost obtained if the current primary state is $(E,T)$, the current action chosen is $b$, and an optimal policy is employed from the next time instant. Also,   $Q^{*}(E,T,C, a)$ corresponds to the   intermediate state $(E,T,C)$ and a corresponding  action   $a \in \{0,1\}$. It is important to note that, we have to maintain two different types of Q-functions to handle primary states and intermediate states; this is not done in standard Q-learning. 
 The optimal Q-value functions are given by:
 
 \scriptsize
\begin{eqnarray}\label{eqn:Q-eqn-single-process-1} 
Q^{*}(E \geq E_{p}+E_{s},T,b=0)&=&T+\alpha \mathbb{E}_{A}J^{*}(min\{E+A,B\},T+1) \nonumber \\
&=& T+\alpha \mathbb{E}_{A}\min_{b \in \{ 0,1 \}}Q^{*}(min\{E+A,B\},\nonumber\\
&&T+1,b)
\end{eqnarray}
\normalsize

Here, the terms in the R.H.S of the first equality operator in   \eqref{eqn:Q-eqn-single-process-1} is the immediate cost of not probing channel state action $(b=0)$ when primary  state is $(E \geq E_{p}+E_{s},T)$ and an $\alpha$ discounted future cost $\mathbb{E}_{A}J^{*}(min\{E+A,B\},T+1)$  which is later substituted by 
$\mathbb{E}_{A}\min_{b \in \{ 0,1 \}}Q^{*}(min\{E+A,B\},T+1,b)$ (averaged over the distribution of A). On the other hand, 

\scriptsize
\begin{eqnarray} \label{eqn:Q-eqn-single-process-2}
Q^{*}(E \geq E_{p}+E_{s},T,b=1)  
&=& \sum_{j=1}^m q_{j} W^{*}(E,T,C_{j})\nonumber\\ 
&=&\sum_{j=1}^m q_{j}\min_{a \in \{ 0,1 \}}Q^{*}(E,T,C_{j}, a)
\end{eqnarray}
\normalsize

Now, 

\scriptsize
\begin{eqnarray} \label{eqn:Q-eqn-single-process-4}
&&Q^{*}(E,T,C,a=0)\nonumber\\
&=&T+\alpha \mathbb{E}_{A}\underbrace{\min_{b \in \{ 0,1 \}}Q^{*}(min\{E-E_{p}+A,B\}, T+1,b)}_{=J^*(min\{E-E_{p}+A,B\}, T+1)}
\end{eqnarray}
\normalsize

and 

\scriptsize
\begin{eqnarray} \label{eqn:Q-eqn-single-process-5}
&&Q^{*}(E,T,C,a=1)\nonumber\\
&=&T(1-p(C))+\alpha p(C)\mathbb{E}_{A}\min_{b \in \{ 0,1 \}}Q^{*}(min\{E-E_{p}-E_{s}+A,B\},\nonumber\\
&&1,b)+\alpha (1-p(C))\mathbb{E}_{A}\min_{b \in \{ 0,1 \}}Q^{*}(min\{E-E_{p}-E_{s}+A,B\},\nonumber\\
&&T+1,b)
\end{eqnarray}
\normalsize

Similarly, the optimal Q-function for primary state $(E< E_{p}+E_{s},T)$ is given by \eqref{eqn:Q-eqn-single-sensor-single-process-with-fading-general} in which only not probing  $(b=0)$ is a feasible action:

\scriptsize
\begin{eqnarray} \label{eqn:Q-eqn-single-sensor-single-process-with-fading-general}
Q^{*}(E< E_{p}+E_{s},T,b=0)&=&T+\alpha \mathbb{E}_{A}J^{*}(min\{E+A,B\},T+1)\nonumber\\
&=&T+\alpha \mathbb{E}_{A}\min_{b \in \{ 0,1 \}}Q^{*}(min\{E+A,B\},\nonumber\\
&&T+1,b)
\end{eqnarray}

\normalsize

\subsubsection{Q-Learning algorithm}\label{section:Optimality equation for Q-fuction SSSP}
The optimal Q-function is given by the zeros of \eqref{eqn:Q-eqn-single-process-1}-\eqref{eqn:Q-eqn-single-sensor-single-process-with-fading-general}. Here, we employ asynchronous stochastic approximation \cite{borkar1998asynchronous} to iteratively converge to the optimal Q-function, starting from any arbitrary Q-function. Online learning is facilitated by sequentially observing $C(t)$ (via channel probing), $r(t)$ (via transmission ACK/NACK), and the number $A(t)$ of energy packets harvested at time~$t$.

In asynchronous stochastic approximation, we need step size sequence $d(t)$ ($t \geq 0$) which  satisfies the following assumptions \cite{bhatnagar2011borkar}:  
\begin{assumption}\label{assumption2}
(i) $0< d(t) \leq \bar{d}$ where $\bar{d} > 0$, \\
(ii) $d(t+1)\leq d(t)$ for all $t \geq 0$, \\
(iii) $\sum_{t=0}^\infty d(t) = \infty$ and $\sum_{t=0}^\infty (d(t))^{2} < \infty$.\\
\end{assumption}
Let us denote by  $\nu_{t}(s,b)$   the number of occurrences of the primary state-action pair $(s,b)$ up to iteration $t$ where $s \in \mathcal{S} $ and $b \in \{0,1\}$. A similar notation is assumed for any generic intermediate state $v$ and the corresponding action $a$.   The following assumption is required for the convergence of asynchronous stochastic approximation \cite{bhatnagar2011borkar}:
\begin{assumption}\label{assumption1}
   $\lim \inf_{t \rightarrow \infty} \frac{\nu_{t}(s,b)}{t} > 0$ almost surely $\forall (s,b)$, and $\lim \inf_{t \rightarrow \infty} \frac{\nu_{t}(v,a)}{t} > 0$ almost surely $\forall (v,a)$.
 \end{assumption}

The proposed algorithm maintains a look-up table $Q_t(\cdot,\cdot)$ for various state-action pairs, and iteratively updates its entries depending on the current state, the current action taken and the observed next state.
 However, Assumption~\ref{assumption1} requires that all state-action pairs should be visited infinitely and comparatively often. This is ensured by taking a random action (uniformly chosen) with probability $\epsilon$ at each decision instant, and taking the action $\arg min_b Q(s,b)$ or $\arg min_a Q(v,a)$ with probability $(1-\epsilon)$.

One example of   a Q-value update  is the following, aimed at convergence to the solution of \eqref{eqn:Q-eqn-single-process-1}:

\scriptsize
\begin{eqnarray} \label{eqn:Q-iteration-single-process-iterate1}
&&Q_{t+1}(E \geq E_{p}+E_{s},T,b=0)\nonumber\\
&=&Q_{t}(E,T,b=0)+d(\nu_{t}(E,T,b=0))\mathrm{\mathbf{1}}\{s(t)=(E,T),b(t)=0\}[ \nonumber\\
&&T+\alpha \min_{b \in \{ 0,1 \}}Q_{t}(min\{E+A(t),B\},T+1,b)-Q_{t}(E,T,b=0)]\nonumber\\
\end{eqnarray}
\normalsize

Here,  $\mathrm{\mathbf{1}}\{\cdot\}$ is the indicator function. It is important to note that,  this $Q$-update can be performed only when the random next state $(min\{E+A(t),B\},T+1)$ is observed. 

Similarly,  \eqref{eqn:Q-eqn-single-process-2}-\eqref{eqn:Q-eqn-single-sensor-single-process-with-fading-general} lead to the following Q-updates for various primary states and intermediate states:

\scriptsize
\begin{eqnarray} \label{eqn:Q-iteration-single-process-iterate2}
&&Q_{t+1}(E \geq E_{p}+E_{s},T,b=1)\nonumber\\
&=& Q_{t}(E,T,b=1)+d(\nu_{t}(E,T,b=1)\mathrm{\mathbf{1}}\{s(t)=(E,T),b(t)=1\}\nonumber\\
&&[\min_{a\in \{ 0,1 \}}Q_{t}(E,T,C, a)-Q_{t}(E,T,b=1)]
\end{eqnarray}
\normalsize

\scriptsize
\begin{eqnarray} \label{eqn:Q-iteration-single-process-iterate4}
&&Q_{t+1}(E,T,C,a=0)\nonumber\\
&=&Q_{t}(E,T,C,a=0)+d(\nu_{t}(E,T,C,a=0))\mathrm{\mathbf{1}}\{v(t)=(E,T,C),\nonumber\\
&&a(t)=0\}[T+\alpha \min_{b\in \{ 0,1 \}}Q_{t}(min\{E-E_{p}+A(t),B\},T+1,b)-\nonumber\\
&&Q_{t}(E,T,C,a=0)]
\end{eqnarray}
\normalsize

\scriptsize
\begin{eqnarray} \label{eqn:Q-iteration-single-process-iterate4.1}
&&Q_{t+1}(E,T,C,a=1)\nonumber\\
&=&Q_{t}(E,T,C,a=1)+d(\nu_{t}(E,T,C,a=1))\mathrm{\mathbf{1}}\{v(t)=(E,T,C),\nonumber\\
&&a(t)=1\}[T(1-p(C))+\alpha p(C)\min_{b\in \{ 0,1 \}}Q_{t}(min\{E-E_{p}-E_{s}+\nonumber\\
&&A(t),B\},1,b)+\alpha (1-p(C))\min_{b\in \{ 0,1 \}}Q_{t}(min\{E-E_{p}-E_{s}+A(t),\nonumber\\
&&B\},T+1,b)-Q_{t}(E,T,C,a=1)]
\end{eqnarray}
\normalsize

\scriptsize
\begin{eqnarray} \label{eqn:Q-iteration-single-process-iterate5}
&&Q_{t+1}(E< E_{p}+E_{s},T,b=0)\nonumber\\
&=&Q_{t}(E,T,b=0)+d(\nu_{t}(E,T,b=0))\mathrm{\mathbf{1}}\{s(t)=(E,T),b(t)=0\}[T+ \nonumber\\
&&\alpha \min_{b\in \{ 0,1 \}}Q_{t}(min\{E+A(t),B\},T+1,b)-Q_{t}(E,T,b=0)]
\end{eqnarray}
\normalsize

In  \eqref{eqn:Q-iteration-single-process-iterate4.1}, we consider the case where packet success probability $p(C)$ for channel state $C$ is known to the source. However, if these probabilities are unknown, then one can replace $p(C)$ in \eqref{eqn:Q-iteration-single-process-iterate4.1} by $r(t)$ to obtain a valid Q-update.

Q-learning type algorithm for $N>1$ can also be formulated similarly.

 \vspace{-10pt} 
 
 \subsection{Markovian system}
\label{section:RL_Markovian}
 Here, we adapt Q-learning for the Markovian model with unknown channel statistics and unknown energy harvesting characteristics to obtain the optimal solutions for \eqref{eqn:Bellman-eqn--single-process-with-markov}. 
 \subsubsection{Optimality equation in terms of Q-function}\label{section:Optimality equation for Q-fuction SSSP-markov}
 The optimal Q-value functions are given by:
 
 \scriptsize
\begin{eqnarray}\label{eqn:Q-eqn-single-process-markov-1} 
&&Q^{*}(E \geq E_{p}+E_{s},T,\tau,C_{prev}, H_{1},b=0)\nonumber\\
&=&T+\alpha \mathbb{E}_{A|H_{1}, H_{next}|H_{1}}J^{*}(min\{E+A,B\},T+1,\tau+1, C_{prev},\nonumber\\
&&H_{next})\nonumber\\
&=&T+\alpha \mathbb{E}_{A|H_{1}, H_{next}|H_{1}}\min_{b\in \{ 0,1 \}}Q^{*}(min\{E+A,B\},T+1,\tau+1,\nonumber\\
&&C_{prev},H_{next}, b)
\end{eqnarray}
\normalsize

On the other hand, 

\scriptsize
\begin{eqnarray} \label{eqn:Q-eqn-single-process-markov-2}
&&Q^{*}(E \geq E_{p}+E_{s},T,\tau,C_{prev}, H_{1},b=1)\nonumber\\
&=& \mathbb{E}_{C\sim q^{(\tau)} (\cdot|C_{prev})}W^{*}(E,T,C, H_{1})\nonumber\\ 
&=& \mathbb{E}_{C\sim q^{(\tau)} (\cdot|C_{prev})}\min_{a \in \{ 0,1 \}}Q^{*}(E,T,C,H_{1}, a)
\end{eqnarray}
\normalsize

When $b(t)=1$, for a current measured channel state $C$, the optimal ${Q}^{*}$-functions for action $a=0$ and $a=1$ are given by:

\scriptsize
\begin{eqnarray} \label{eqn:Q-eqn-single-process-markov-4}
&&Q^{*}(E,T,C,H_{1}, a=0)\nonumber\\
&=&T+\alpha \mathbb{E}_{A|H_{1},H_{next}|H_{1}}\min_{b\in \{ 0,1 \}}Q^{*}(min\{E-E_{p}+A,B\},T+1,\nonumber\\
&& 1,C,H_{next},b)
\end{eqnarray}
\normalsize
\scriptsize
\begin{eqnarray} \label{eqn:Q-eqn-single-process-markov-5}
&&Q^{*}(E,T,C,H_{1}, a=1)\nonumber\\
&=&T(1-p(C))+\alpha p(C)\mathbb{E}_{A|H_{1},H_{next}|H_{1}}\min_{b\in \{ 0,1 \}}Q^{*}(min\{E-E_{p}-\nonumber\\
&&E_{s}+A,B\},1,1, C,H_{next},b)+\alpha (1-p(C))\mathbb{E}_{A|H_{1},H_{next}|H_{1}}\nonumber\\
&&\min_{b\in \{ 0,1 \}}Q^{*}(min\{E-E_{p}-E_{s}+ A,B\},T+1,1, C,H_{next},b)\}
\end{eqnarray}
\normalsize

Also, the optimal Q-function for primary state $(E< E_{p}+E_{s},T,\tau,C_{prev}, H_{1})$ is given by:

\scriptsize
\begin{eqnarray} \label{eqn:Q-eqn-single-sensor-markov-general}
&&Q^{*}(E< E_{p}+E_{s},T,\tau,C_{prev}, H_{1},b=0)\nonumber\\
&=&T+\alpha \mathbb{E}_{A|H_{1}, H_{next}| H_{1}}\min_{b\in \{ 0,1 \}}Q^{*}(min\{E+A,B\},T+1,\tau+1,\nonumber\\
&&C_{prev},H_{next}, b)
\end{eqnarray}
\normalsize

Similarly, the Q-functions are given for primary state $(E,T,\tau, C_{prev}, H_{2})$ by putting $A=0$. 

\subsubsection{Q-Learning algorithm}\label{section:iteration update for Q-fuction SSSP markov}
 Here also, we employ asynchronous stochastic approximation to find the solution of \eqref{eqn:Q-eqn-single-process-markov-1}-\eqref{eqn:Q-eqn-single-sensor-markov-general} as in  \ref{IID system-RL}. The Q-updates for various primary states and intermediate states are given by the following:

\scriptsize
\begin{eqnarray} \label{eqn:Q-iteration-single-process-markov-iterate1}
&&Q_{t+1}(E \geq E_{p}+E_{s},T,\tau, C_{prev}, H_{1},b=0)\nonumber\\
&=&Q_{t}(E,T,\tau,C_{prev}, H_{1},b=0)+d(\nu_{t}(E,T,\tau,C_{prev}, H_{1},b=0))\nonumber\\
&&\mathrm{\mathbf{1}}\{s(t)=(E,T,\tau,C_{prev}, H_{1}),b(t)=0\}[T+ \alpha \min_{b \in \{ 0,1 \}}Q_{t}(min\{E\nonumber\\
&&+A(t),B\},T+1,\tau+1,C_{prev}, H_{next}, b)-Q_{t}(E,T,\tau,C_{prev}, H_{1},\nonumber\\
&&b=0)]
\end{eqnarray}
\normalsize


\scriptsize
\begin{eqnarray} \label{eqn:Q-iteration-single-process-markov-iterate2}
&&Q_{t+1}(E \geq E_{p}+E_{s},T,\tau,C_{prev}, H_{1},b=1)\nonumber\\
&=&Q_{t}(E,T,\tau,C_{prev}, H_{1},b=1)+d(\nu_{t}(E,T,\tau,C_{prev}, H_{1},b=1)\nonumber\\
&&\mathrm{\mathbf{1}}\{s(t)=(E,T,\tau,C_{prev}, H_{1}),b(t)=1\}[\min_{a\in \{ 0,1 \}}Q_{t}(E,T,C,H_{1}, a)\nonumber\\
&&-Q_{t}(E,T,\tau,C_{prev}, H_{1},b=1)]
\end{eqnarray}
\normalsize

\scriptsize
\begin{eqnarray} \label{eqn:Q-iteration-single-process-markov-iterate4}
&&Q_{t+1}(E,T,C,H_{1},a=0)\nonumber\\
&=&Q_{t}(E,T,C,H_{1},a=0)+d(\nu_{t}(E,T,C,H_{1},a=0))\mathrm{\mathbf{1}}\{v(t)=\nonumber\\
&&(E,T,C,H_{1}),a(t)=0\}[T+\alpha \min_{b\in \{ 0,1 \}}Q_{t}(min\{E-E_{p}+A(t),B\},\nonumber\\
&&T+1,1,C,H_{next}, b)-Q_{t}(E,T,C,H_{1},a=0)]
\end{eqnarray}
\normalsize
\scriptsize
\begin{eqnarray} \label{eqn:Q-iteration-single-process-markov-iterate4-1}
&&Q_{t+1}(E,T,C,H_{1},a=1)\nonumber\\
&=&Q_{t}(E,T,C,H_{1},a=1)+d(\nu_{t}(E,T,C,H_{1},a=1))\mathrm{\mathbf{1}}\{v(t)=\nonumber\\
&&(E,T,C,H_{1}),a(t)=1\}[T(1-p(C))+\alpha p(C)\min_{b\in \{ 0,1 \}}Q_{t}(min\{E\nonumber\\
&&-E_{p}-E_{s}+A(t),B\},1,1,C,H_{next},b)+\alpha (1-p(C))\min_{b\in \{ 0,1 \}}Q_{t}(\nonumber\\
&&min\{E-E_{p}-E_{s}+A(t),B\},T+1,1,C,H_{next},b)-\nonumber\\
&&Q_{t}(E,T,C,H_{1},a=1)] 
\end{eqnarray}
\normalsize

\scriptsize
\begin{eqnarray} \label{eqn:Q-iteration-single-process-markov-iterate5}
&&Q_{t+1}(E< E_{p}+E_{s},T,\tau,C_{prev}, H_{1},b=0)\nonumber\\
&=&Q_{t}(E,T,\tau,C_{prev}, H_{1},b=0)+d(\nu_{t}(E,T,\tau,C_{prev}, H_{1},b=0))\nonumber\\
&&\mathrm{\mathbf{1}}\{s(t)=(E,T,\tau,C_{prev}, H_{1}),b(t)=0\}[T+\alpha \min_{b\in \{ 0,1 \}}Q_{t}(min\{E\nonumber\\
&&+A(t), B\},T+1,\tau+1,C_{prev},H_{next},b)-Q_{t}(E,T,\tau,C_{prev}, H_{1},\nonumber\\
&&b=0)]
\end{eqnarray}
\normalsize

Similarly, the Q-value function iteration updates are given for primary state $(E,T,\tau, C_{prev}, H_{2})$ by taking $A=0$. Also, when $p(C)$ is not known, it can be replaced by $r(t)$ in \eqref{eqn:Q-iteration-single-process-markov-iterate4-1}.

 \vspace{-8pt}

\section{Numerical results}\label{section:numerical-work}

{\bf Note:} For simulation purpose, we assume that the maximum age of a process(es) is upper bounded by $T_{max}=100$.

\subsection{ Single process, I.I.D. system, known dynamics }\label{subsection:single-sensor-single=process-channel-fading}
We consider five channel states ($m=5$) with channel state occurrence probabilities $\bm{q}=[0.2,0.2,0.2,0.2,0.2]$ and the corresponding packet success probabilities $\bm{p}=[0.9, 0.7, 0.5, 0.3, 0.1]$. Energy arrival process is i.i.d. $Bernoulli(\lambda)$ with energy buffer size $B=12$, $E_{p}=1$ unit, $E_{s}=1$ unit \cite{hatami2021aoi}.  Numerical exploration reveals that there exists a threshold policy on $T$ in decision-making for channel state probing; see Fig. ~\ref{fig2}(a) which substantiates our conjecture in \ref{subsubsection:Policy structure-IID-System-N=1}. It is observed that this $T_{th}(E)$ decreases with $E$ since higher available energy in the energy buffer allows the EH node to probe the channel state more aggressively. Similar reasoning explains the observation that $T_{th}(E)$ decreases with $\lambda$. For probed channel state, Fig.~\ref{fig2}(b) shows the variation of $p_{th}(E,T)$ with $E,T, \lambda$. It is observed that $p_{th}(E,T)$ decreases with $E$ since the EH node tries to sample the process more aggressively if more energy is available in the buffer. Similarly, a higher value of $T$ results in aggressive sampling, and hence $p_{th}(E,T)$ decreases with $T$. By similar arguments as before, we can explain the observation that this $p_{th}(E,T)$ decreases with $\lambda$. 

Further, rigorous numerical analysis also reveals that the optimal decision rules for probing and sampling are threshold rules on $E$, which we omit in this paper due to lack of space. 

\vspace{-8pt}
\begin{figure}[h]
  \begin{center}
 \subfloat[]{\includegraphics[height=4cm,width=8cm]{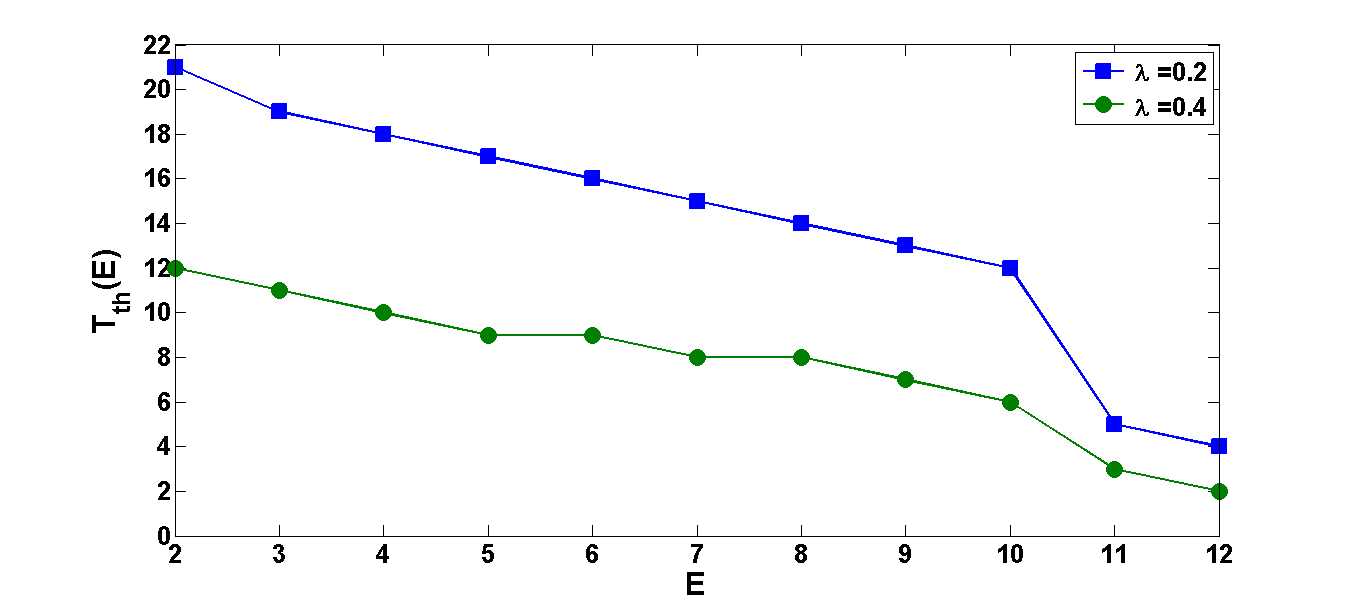}}
   \hfill
  \subfloat[]{\includegraphics[height=4cm,width=8cm]{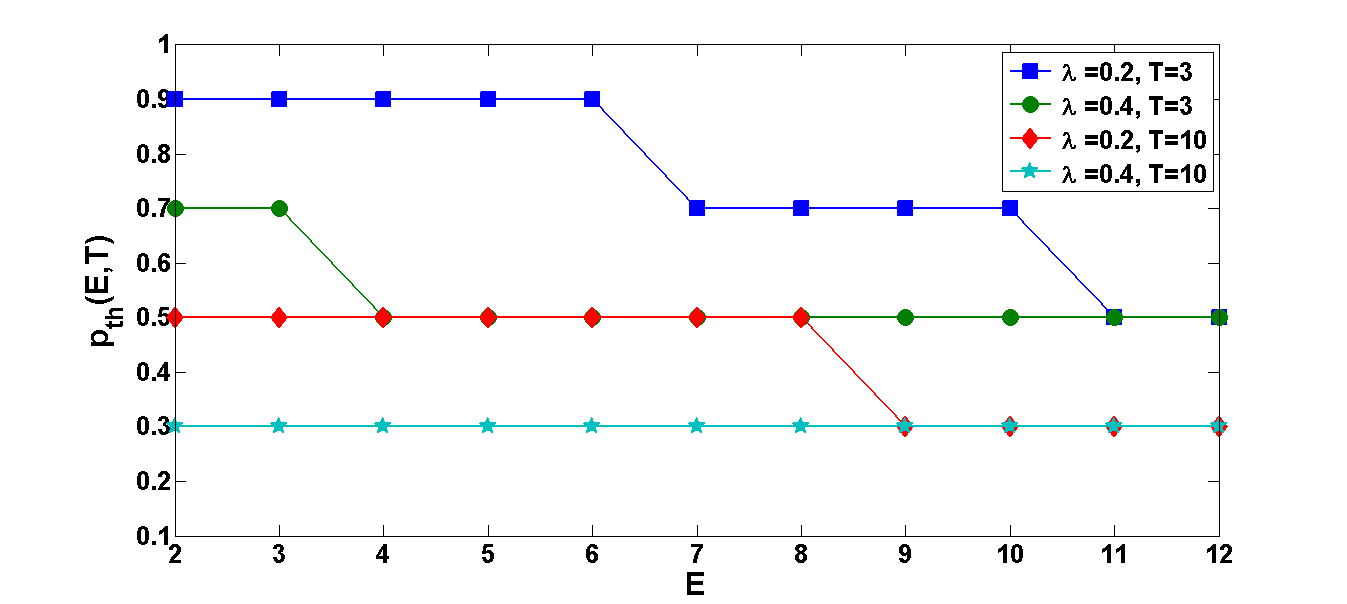}}
 \caption{Single process ($N=1$), I.I.D. model, known dynamics: (a) Variation of $T_{th}(E)$ with $E, \lambda$ and (b) Variation of $p_{th}(E,T)$ with $E, T, \lambda$.}
 \label{fig2}
 \end{center}
 \vspace{-18pt}
 \end{figure}

\subsection{Multiple processes ($N>1$), I.I.D. model, known dynamics}
We choose $N=3,   B=12, E_{p}=1, E_{s}=1$, and the same channel model and parameters as in Section~\ref{subsection:single-sensor-single=process-channel-fading}. Fig.~\ref{fig2.1}(a) provides support to our conjecture for optimal probing policy in \ref{subsubsection:Policy structure-IID-System-N>1} by demonstrating the variation on the threshold on $T_1$ for given $T_2, T_3$, for channel state probing. It is observed that $T_{th}(E, T_2, T_3 )$ decreases with $E$ and $\lambda$. Extensive numerical work also demonstrated that for the same   $\lambda$, this threshold decreases with each of $T_2$, $T_3$, since an increase in the age of processes results in more frequent channel probing. For probed channel state, Fig.~\ref{fig2.1}(b) shows that $p_{th}(E,T_1, T_2, T_3 )$ decreases with $E$ and $\lambda$. Further, the numerical analysis also demonstrated that for the same $\lambda$, this threshold decreases with each of $T_1$, $T_2$, $T_3$, since the source node becomes more aggressive in sampling. 

\begin{figure}[h]
  \begin{center}
 \subfloat[]{\includegraphics[height=4cm,width=7cm]{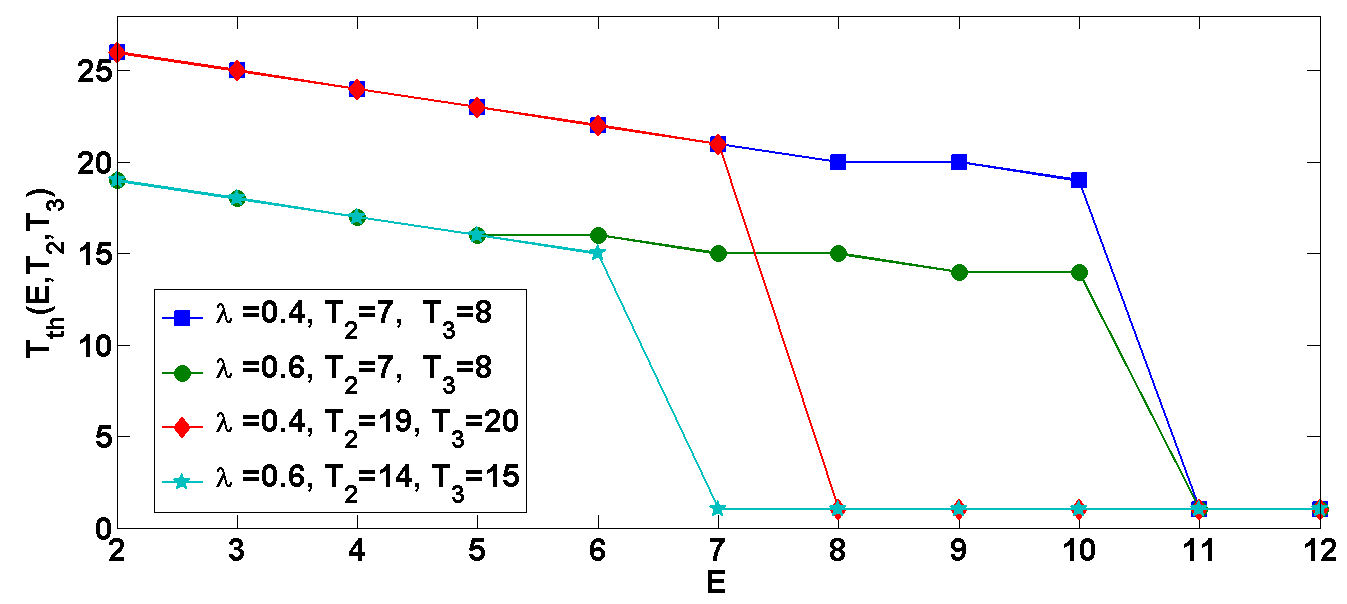}}
 \hfill
 \subfloat[]{ \includegraphics[height=4cm,width=7cm]{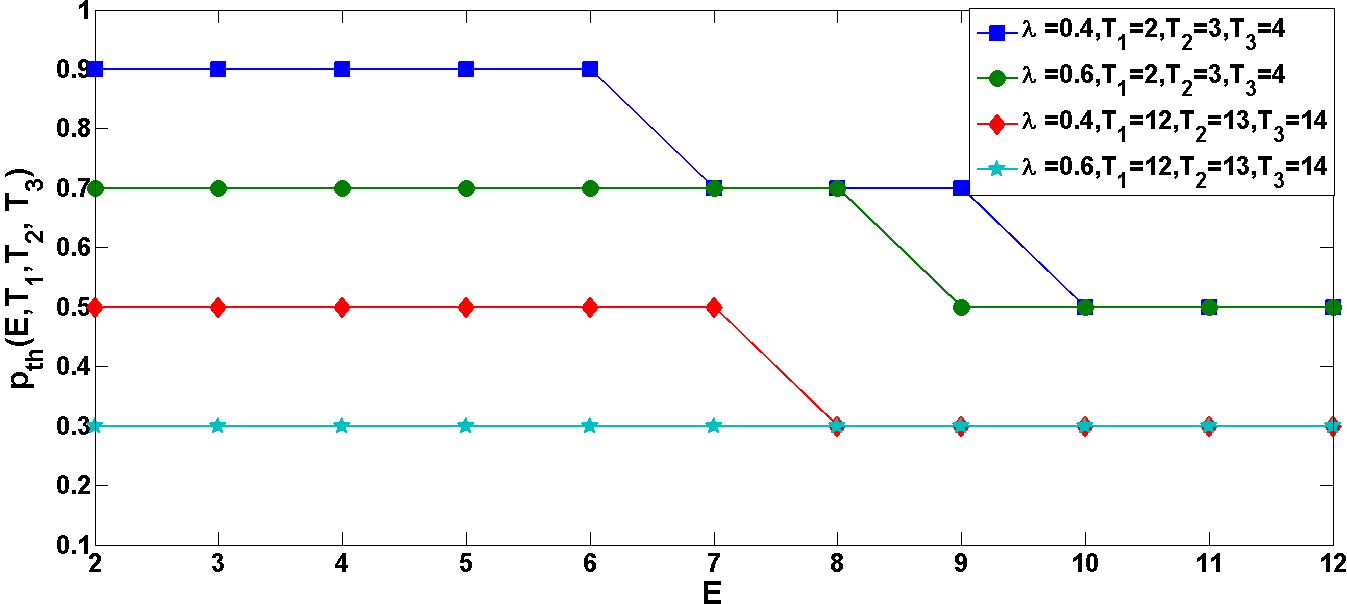}}
 \caption{$N=3$, I.I.D. model, known dynamics: (a) Variation of $T_{th}(E,T_2, T_3)$ with $E, T_2,T_3, \lambda$ and (b) Variation of $p_{th}(E,T_1, T_2, T_3)$ with $E,T_1,T_2, T_3, \lambda$.}
 \label{fig2.1}
 \end{center}
 \vspace{-15pt}
\end{figure}

\subsection{Markovian model, $N=1$, known dynamics }\label{subsection:markovian-sensor-single-process}

For numerical work, we have considered a two-state   ($m=2$) Gilbert-Eliot channel model  \cite{gilbert1960capacity, mushkin1989capacity} since it is practical as well as results in low computational complexity, and has been used in many recent papers involving AoI and EH source \cite{pan2022age, fountoulakis2023scheduling, yao2021age, abad2017channel}. While the Gilbert-Eliot channel model is a classic model, it is also highly relevant to millimeter-wave (mm-wave) communication which is a key driver for 5G and beyond wireless communications. Due to small wavelength, narrow beamwidth, and large propagation loss, mm-wave signals are highly susceptible to blocking and the channel can rapidly switch between on and off states due to the movement of obstacles and reflectors \cite{pan2022age, yao2019integrating, bildea2015link, he2014adaptive, dias2023millimeter}.  Thus, the Gilbert-Eliot channel model can be used for mm-wave communication channels with the ``On"  and ``Off" states representing Line of Sight (LOS) and Non-Line of Sight conditions, respectively. However, the procedure described in this paper can be used to numerically analyze any finite $m$-state channel model, and similar results can be obtained there too. \\
Here, we consider a two-state Markovian fading channel ($m=2$), where $C_{1}$ is the good channel state and $C_{2}$ is the bad channel state. The channel state transition probabilities are $q_{1,2}=0.1$ and $q_{2,1}=0.1$ \cite{fountoulakis2023scheduling, yao2021age}, and the corresponding packet transmission success probabilities are   $p_1=0.9$ and  $p_2=0.4$. Also, for the Markovian energy harvesting process, we consider the transition probabilities $h_{1,2}=0.3$ and $h_{2,1} =0.3$ \cite{sombabu2020age, ceran2021learning}. When the source node is in a harvesting state, the energy packet generation process is i.i.d. $Bernoulli(\lambda)$ across time, and otherwise zero. We consider   $B=9$, $E_{p}=1$ unit, $E_{s}=1$ unit, and $\lambda=0.4$.

Our numerical exploration demonstrated the threshold nature of the optimal policy which supports conjectures predicted in \ref{subsection:policy structure_markov}. Fig.~\ref{fig3} shows that, for a fixed $\tau$, $T_{th}(E,\tau,C_{prev},H)$ decreases with $E$, $T_{th}(E,\tau,C_{prev},H_1) \leq T_{th}(E,\tau,C_{prev},H_2)$   and $T_{th}(E,\tau,C_1,H) \leq T_{th}(E,\tau,C_2,H)$ (because probability of packet success in channel state $C_{1}$ is higher than that in $C_{2}$).   Also, from Fig.~\ref{fig3.1} it is observed that for probed channel state, $p_{th}(E,T,H)$ decreases with $E$ and $T$, and $p_{th}(E,T,H_1) \leq p_{th}(E,T,H_2)$.

\begin{figure}[h]
  \begin{center}
 \subfloat[]{\includegraphics[height=4cm,width=8cm]{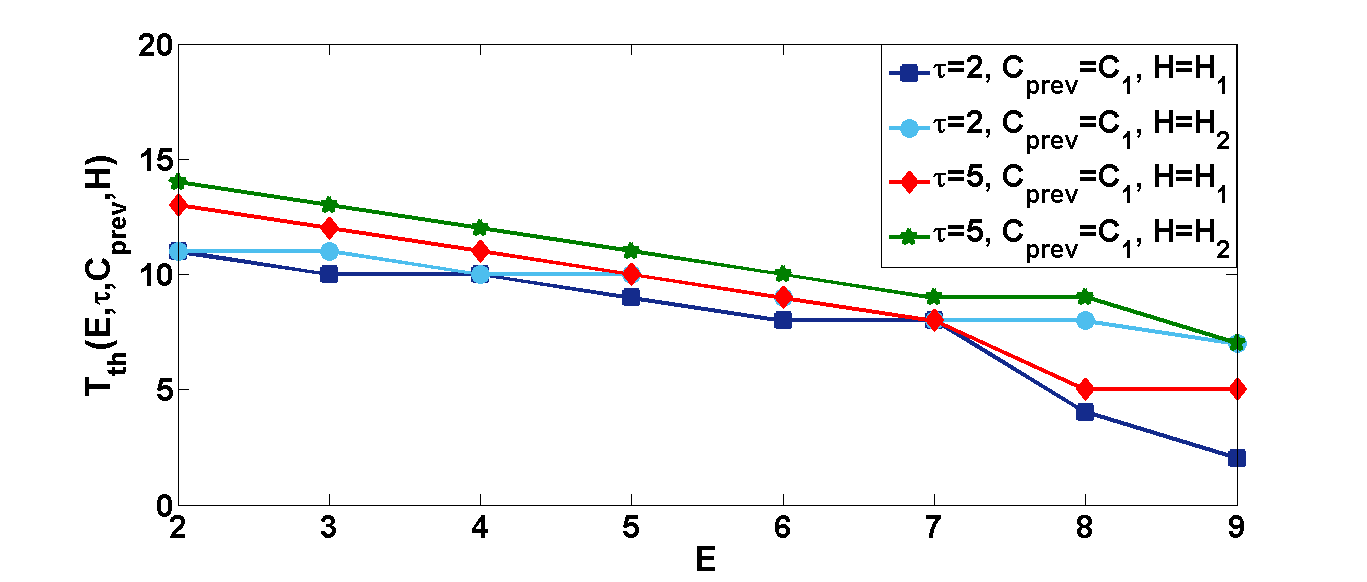}}
   \hfill
  \subfloat[]{\includegraphics[height=4cm,width=8cm]{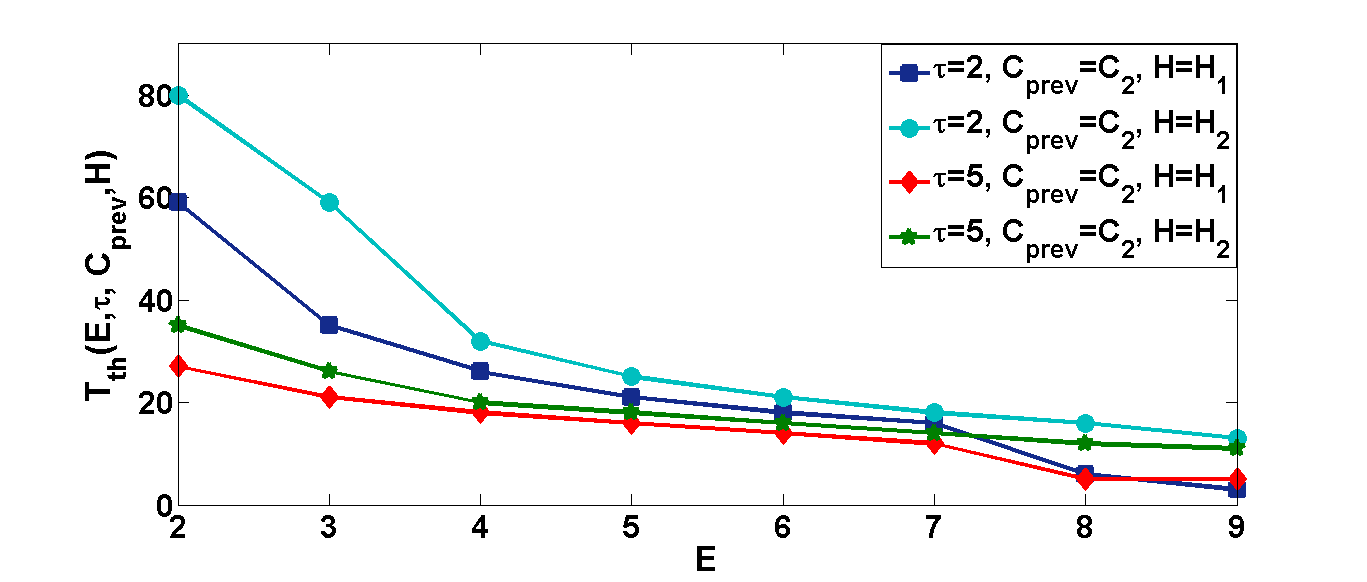}}
 \caption{ $N=1$, Markov model: (a) Variation of $T_{th}(E,\tau, C_{prev},H)$ with $E$, $\tau$, $C_{prev}=C_{1}$ and $H$  and 
 (b) Variation of $T_{th}(E,\tau, C_{prev},H)$ with $E$, $\tau$, $C_{prev}=C_{2}$ and $H$.}
 \label{fig3}
 \end{center}
\end{figure}

\begin{figure}[h]
  \begin{center}
 \includegraphics[height=4cm,width=8cm]{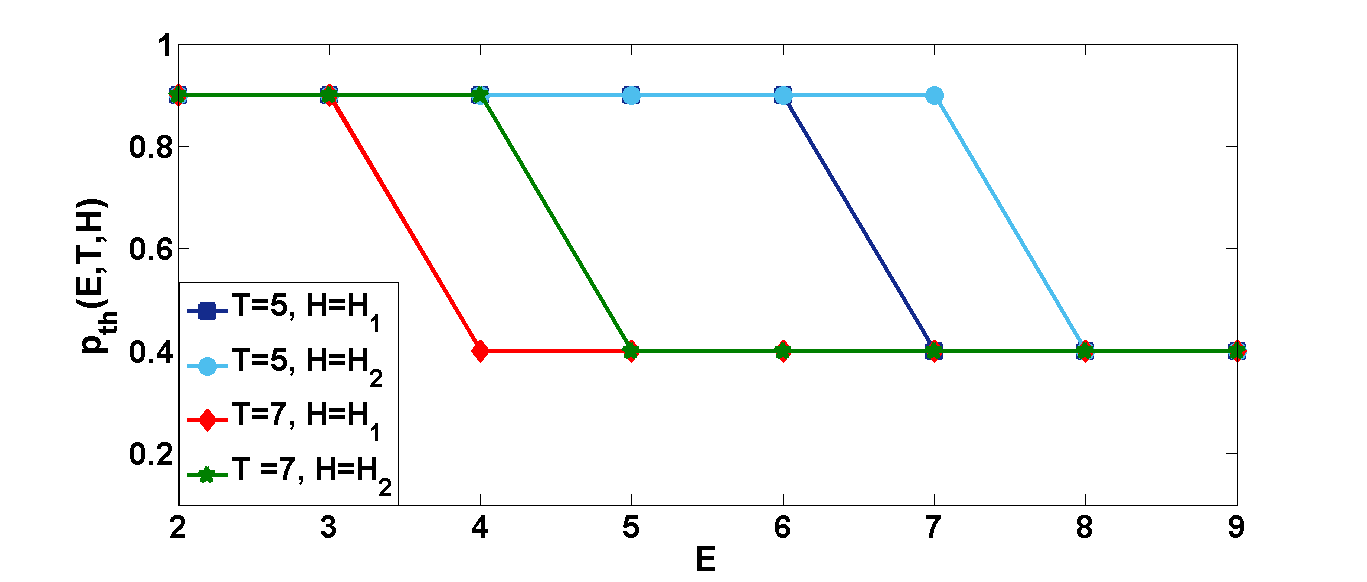}
 \caption{ $N=1$, Markov model: Variation of $p_{th}(E,T,H)$ with $E$, $T$, and $H$.}
 \label{fig3.1}
 \end{center}
 \vspace{-18pt}
 \end{figure}

Also, numerical exploration revealed that, after probing is done, the optimal sampling policy can also be represented as a threshold policy on $T$ instead of $p(C)$; in this case, sampling is done when $T \geq T_{th}(E,C,H)$. The variation of $  T_{th}(E,C,H)$ with $E,C,H$ is shown in Fig. ~\ref{fig4}, which supports many intuitive justifications provided earlier in this section. Similar result as in Fig. ~\ref{fig4} can also be obtained for the i.i.d. system model which we omit due to space constraint.

\begin{figure}[h]
  \begin{center}
 \includegraphics[height=4cm,width=8cm]{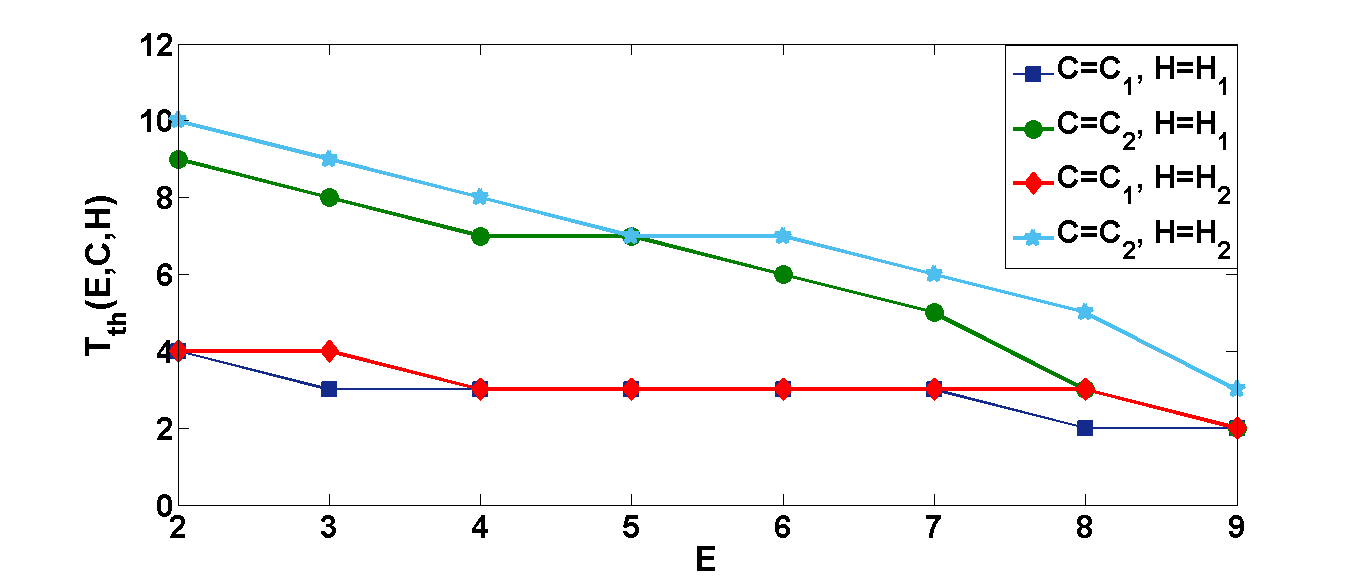}
 \caption{For $N=1$, Markov model: Variation of $T_{th}(E,C,H)$ with $E$, $C$, and $H$.}
 \label{fig4}
 \end{center}
 \end{figure}

\vspace{-8pt}
\subsection{ Comparison between probing and not probing }
\subsubsection{I.I.D. System}
We consider $N=1$ and the same model as in  Section~\ref{subsection:single-sensor-single=process-channel-fading}, and compare the performance of our probing-based threshold policy against the optimal policy that involves sampling mandatorily without probing (as in Section \ref{No probing case}) based on $T$ and $E$. The results are summarised in Fig. \ref{fig5} (a); we note that AoI increases with $E_p$ and $E_s$ since each action becomes more costly for the source. One interesting observation is that when the sensing energy is close to probing energy ($E_s=1,E_p=1$), we see slightly higher time-averaged  AoI values for the probing case compared to the case with $E_s=1$ and no probing. This happens because the advantage offered by probing is being overshadowed by the extra amount of energy expended for it. However,  when $E_s=5,E_p=1$, we notice that probing yields a lower time-averaged AoI. Obviously, the importance of probing is dependent on the trade-off between the advantage offered by probing and the extra energy required for it. Typically,  $E_p$ is much smaller than   $E_s$ since probing requires less energy as compared to sampling and transmitting a data packet. We have also observed numerically that, when there is sufficient variation in the channel quality, probing improves the AoI performance. On the other hand, when channel variation is not predominant, spending energy in probing tends to hit the AoI performance and can sometimes render it worse than no probing.

\subsubsection{Markovian System}
We consider the same system parameters as in  Section~\ref{subsection:markovian-sensor-single-process}.  Here also, Fig.~\ref{fig5} (b) exhibits a similar trade-off  between the improvement offered by probing and the energy spent in it, as seen for the i.i.d. model. 

\vspace{-10pt}
\begin{figure}[h]
  \begin{center}
 \subfloat[]{\includegraphics[height=4cm,width=7cm]{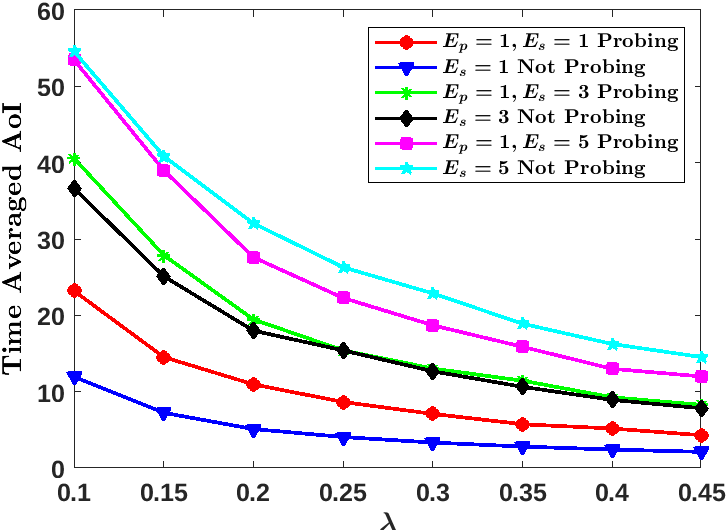}}
   \hfill
  \subfloat[]{\includegraphics[height=4cm,width=7cm]{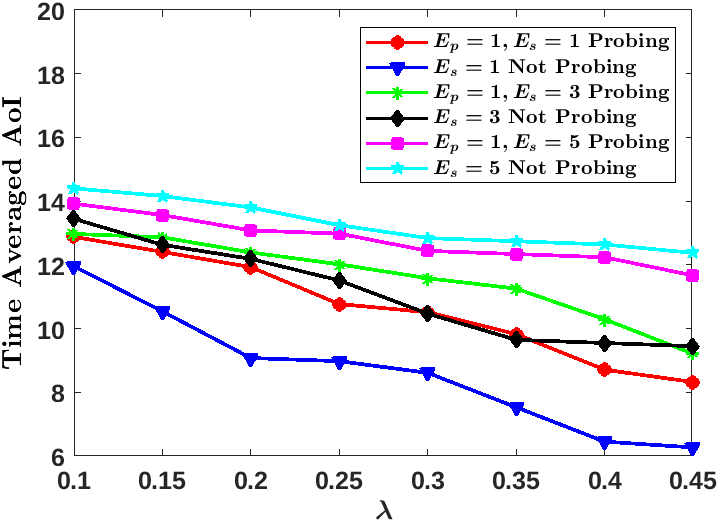}}
 \caption{For $N=1$, Variation of time-averaged AoI and comparison of a probing and non-probing system with the change in $\lambda, E_p$, and $E_s$ for (a) I.I.D. model,
 (b) Markovian model.
 }
 \label{fig5}
 \end{center}
 \vspace{-24pt}
\end{figure}

\subsection{Q-Learning}
We consider the same settings as Section~\ref{subsection:single-sensor-single=process-channel-fading} and Section~\ref{subsection:markovian-sensor-single-process}, except that we set $B=5$ and an upper bound on the age as  $T_{max} = 7$, in order to reduce the number of possible states. We then run an online Q-learning scheme across two different sample paths. For the sake of comparison, we also plot the time-averaged AoI of a uniform random strategy as well as that of the optimal policy derived through value iteration. Convergence of the time-averaged AoI under Q-learning to the optimal AoI for the i.i.d. channel system and the Markovian model can be seen in Fig. \ref{fig6} (a) and Fig. \ref{fig6} (b), respectively. We also observe significant improvement from the initial policy which always picks an action uniformly at random; this demonstrates the necessity of the MDP-based formulation for probing and sampling.

\vspace{-10pt}
\begin{figure}[h]
  \begin{center}
 \subfloat[]{\includegraphics[height=4cm,width=7cm]{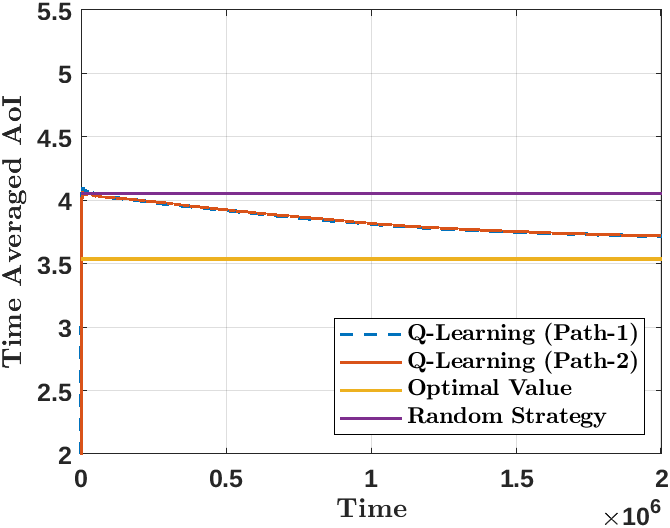}}
   \hfill
  \subfloat[]{\includegraphics[height=4cm,width=7cm]{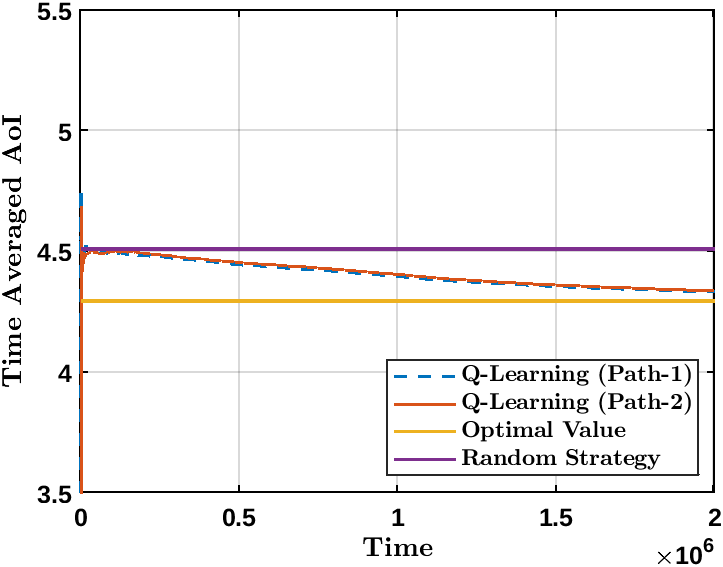}}
 \caption{Improvement in time-averaged AoI via   Q-learning, $N=1$, (a)    i.i.d. model, 
 (b)   Markovian model. 
 }
 \label{fig6}
 \end{center}
 \vspace{-12pt}
\end{figure}

\section{Conclusions}\label{section:conclusion}
In this paper, we have derived optimal policy structures for minimizing the time-averaged expected AoI under an energy-harvesting source. We considered single and multiple processes, i.i.d. and Markovian time-varying channels, time-varying characteristics for energy harvesting, and channel probing capability at the source. The optimal source sampling policy often turned out to be a threshold policy. We have also proposed RL algorithms for unknown channel statistics and energy harvesting characteristics. Numerical results demonstrated the power of channel probing, and also the performances and trade-offs for various algorithms.  However, there are a number of issues that remain untouched, such as multi-source star topology and multi-hop network settings with multiple source and sink nodes. We plan to address these issues in our future research.

\bibliographystyle{IEEEtran-mod}
\bibliography{ref_paper.bib}

\appendices
\section{Proof of proposition~\ref{proposition:convergence-of-value-function-J}}\label{appendix:proof-of-convergence-of-value-function-J}
We prove that $max_{s \in \mathcal{S}}|J^{(t)}(s)-J^{*}(s)| \uparrow 0$ as $t \uparrow \infty$. 

Let us define the error $e_{t}=\max_{s \in \mathcal{S}}|J^{(t)}(s)-J^{*}(s)|$ and $J^{(0)}(s)$ as initial estimate for $J^{*}(s)$.\\ 
For any state $s$, we seek to establish a relation between the error at time $t+1$ to the error at time $t$.

\scriptsize
\begin{eqnarray}\label{con1}
&&J^{(t+1)}(E \geq E_{p}+E_{s},T)\nonumber\\
&=& min \bigg\{T+\alpha \mathbb{E}_{A}J^{(t)}(min\{E+A,B\},T+1), \nonumber\\
&&\mathbb{E}_{C}\bigg(min\{ T+\alpha \mathbb{E}_{A}J^{(t)}(min\{E-E_{p}+A,B\},T+1), \nonumber\\
&&T(1-p(C))+\alpha p(C)\mathbb{E}_{A}J^{(t)}(min\{E-E_{p}-E_{s}+A,B\},1)+ \nonumber\\
&&\alpha (1-p(C))\mathbb{E}_{A}J^{(t)}(min\{E-E_{p}-E_{s}+A,B\},T+1)\} \bigg) \bigg\} 
\end{eqnarray}
\normalsize

We assume there exists an optimal value function $J^{*}(E,T)$ for the discounted cost problem. By using the fact that $J^{(t)}(E,T)$ by $J^{(t)}(E,T) \leq J^{(*)}(E,T) +e_{t}$ in \eqref{con1}, we obtain:

\vspace{-4pt}

\scriptsize
\begin{eqnarray}\label{con2}
&&J^{(t+1)}(E \geq E_{p}+E_{s},T)\nonumber\\ 
&\leq& min \bigg\{T+ \alpha \mathbb{E}_{A}(J^{*}(min\{E+A,B\},T+1)+e_{t}),\nonumber\\
&&\mathbb{E}_{C}\bigg(min\{ T+\alpha \mathbb{E}_{A}(J^{*}(min\{E-E_{p}+A,B\},T+1)+e_{t}), \nonumber\\
&&T(1-p(C))+\alpha p(C)\mathbb{E}_{A}(J^{*}(min\{E-E_{p}-E_{s}+A,B\},1)+e_{t})\nonumber\\
&&+\alpha (1-p(C))\mathbb{E}_{A}(J^{*}(min\{E-E_{p}-E_{s}+A,B\},T+1)+e_{t})\} \bigg) \bigg\}\nonumber\\
\end{eqnarray}
\normalsize

Similarly, using $J^{(t)}(E,T)$ by $J^{(t)}(E,T) \geq J^{(*)}(E,T) -e_{t}$ in \eqref{con1}, we obtain:

\vspace{-4pt}

\scriptsize
\begin{eqnarray}\label{con3}
&&J^{(t+1)}(E \geq E_{p}+E_{s},T)\nonumber\\
&\geq & min \bigg\{T+ \alpha \mathbb{E}_{A}(J^{*}(min\{E+A,B\},T+1)-e_{t}),\nonumber\\
&&\mathbb{E}_{C}\bigg(min\{ T+\alpha \mathbb{E}_{A}(J^{*}(min\{E-E_{p}+A,B\},T+1)-e_{t}), \nonumber\\
&&T(1-p(C))+\alpha p(C)\mathbb{E}_{A}(J^{*}(min\{E-E_{p}-E_{s}+A,B\},1)-e_{t})\nonumber\\
&&+\alpha (1-p(C))\mathbb{E}_{A}(J^{*}(min\{E-E_{p}-E_{s}+A,B\},T+1)-e_{t})\} \bigg) \bigg\}\nonumber\\
\end{eqnarray}
\normalsize

Combining the results obtained from equations \eqref{con2} and \eqref{con3}, we get:

\scriptsize
\begin{eqnarray}
&&J^{*}(E \geq E_{p}+E_{s},T)-\alpha e_{t} \nonumber\\
&\leq& J^{(t+1)}(E \geq E_{p}+E_{s},T) \leq J^{*}(E \geq E_{p}+E_{s},T)+\alpha e_{t} \nonumber\\
& \implies& |J^{(t+1)}(E \geq E_{p}+E_{s},T)-J^{*}(E \geq E_{p}+E_{s},T)| \leq \alpha e_{t} \nonumber\\
&\implies& \max_{s \in \mathcal{S}}|J^{(t+1)}(s)-J^{*}(s)| \leq \alpha e_{t} \nonumber\\
&\implies& e_{t+1} \leq \alpha e_{t} \nonumber\\
& \implies& \lim_{t \rightarrow \infty} e_t =0
\end{eqnarray}
\normalsize
 
  Hence, $J^{(t)}(s)$ converges to $J^{*}(s)$.

\vspace{-6pt}
\section{Proof of Lemma~\ref{lemma:single-sensor-single-process-J-decreasing-in-p}}\label{appendix:proof-of-lemma-single-sensor-single-process-fading-cost-increasing-in-T-decrease-in-p}
 We prove this result by value iteration:
 
\scriptsize
\begin{eqnarray}
&&J^{(t+1)}(E \geq E_{p}+E_{s},T)\nonumber\\
&=& min \bigg\{T+\alpha \mathbb{E}_{A}J^{(t)}(min\{E+A,B\},T+ 1), \nonumber\\
&&\mathbb{E}_{C}\bigg(min\{ T+\alpha\mathbb{E}_{A}J^{(t)}(min\{E-E_{p}+A,B\}, T+1),\nonumber\\
&&T(1-p(C))+\alpha p(C)\mathbb{E}_{A}J^{(t)}(min\{E-E_{p}-E_{s}+ A,B\},1)+\nonumber\\
&&\alpha (1-p(C))\mathbb{E}_{A}J^{(t)}(min\{E-E_{p}-E_{s}+A,B\},T+1)\} \bigg) \bigg\} \nonumber\\
&&J^{(t+1)}(E < E_{p}+E_{s},T)\nonumber\\
&=&T+\alpha \mathbb{E}_{A}J^{(t)}(min\{E+A,B\},T+1)
\end{eqnarray}
\normalsize

Let us start with $J^{(0)}(s) = 0$ for all $s\in S$. Clearly,
$J^{(1)}(E \geq E_{p}+E_{s}, T) =min \{T, \mathbb{E}_{C}(min\{ T,T(1-p(C))\}) \}$ $=min \{T, \mathbb{E}_{C}(T(1-p(C))\} $ and $J^{(1)}(E < E_{p}+E_{s} , T) = T$. Hence, for any given $E$, the value function $J^{(1)}(E, T)$ is an increasing function of $T$. As an induction hypothesis, we assume that $J^{(t)}(E, T)$ is also an increasing function of $T$. Now,

\scriptsize
\begin{eqnarray}\label{value-iteration-single-sensor-with-fading-1}
&&J^{(t+1)}(E \geq E_{p}+E_{s},T)\nonumber\\
&=& min \bigg\{T+\alpha \mathbb{E}_{A}J^{(t)}(min\{E+A,B\},T+1),\nonumber\\
&&\mathbb{E}_{C}\bigg(min\{ T+\alpha \mathbb{E}_{A}J^{(t)}(min\{E-E_{p}+A,B\},T+1),\nonumber\\
&& T(1-p(C))+\alpha p(C)\mathbb{E}_{A}J^{(t)}(min\{E-E_{p}-E_{s}+A,B\},1)+ \nonumber\\
&&\alpha (1-p(C))\mathbb{E}_{A}J^{(t)}(min\{E-E_{p}-E_{s}+A,B\},T+1)\} \bigg) \bigg\} 
\end{eqnarray}
\normalsize

We need to show that $J^{(t+1)}(E \geq E_{p}+E_{s},T)$ is also increasing in $T$. The first term inside the minimization operation in  \eqref{value-iteration-single-sensor-with-fading-1} is increasing in $T$, by the induction hypothesis and from the fact that expectation is a linear operation. On the other hand, the second term has linear expectation over the channel state and another minimization operator. Also, the first and second terms inside the second minimization operation in \eqref{value-iteration-single-sensor-with-fading-1} are increasing in $T$ by the induction hypothesis and the linearity of expectation operation. Thus, $J^{(t+1)}(E \geq E_{p}+E_{s},T)$ is also increasing in $T$. By similar arguments, we can claim that $J^{(t+1)}(E < E_{p}+E_{s} ,T)$ is increasing in $T$. Now, since $J^{(t)}(\cdot)\uparrow J^{*}(\cdot)$ as $t \uparrow \infty$ 
by proof of Proposition~\ref {proposition:convergence-of-value-function-J}, $J^{*}(E,T)$ is also increasing in $T$. Hence, the first part of the lemma is proved. 
\\
For current probed channel state, the value function $W^{(t+1)}(E,T,C)$ is given by: 

\vspace{-8pt}

\scriptsize
\begin{eqnarray}\label{value-iteration-single-sensor-with-fading-11}
&&W^{(t+1)}(E,T,C)\nonumber\\
&=&min\{T+\alpha \mathbb{E}_{A}J^{(t)}(min\{E-E_{p}+A,B\},T+1),\nonumber\\
&&T(1-p(C))+\alpha p(C)\mathbb{E}_{A}J^{(t)}(min\{E-E_{p}- E_{s}+A,B\},1)+\nonumber\\
&&\alpha (1-p(C))\mathbb{E}_{A}J^{(t)}(min\{E-E_{p}-E_{s}+A,B\},T+1)\}
\end{eqnarray}
\normalsize

The first term (cost of not sampling a source) inside the minimization operation in \eqref{value-iteration-single-sensor-with-fading-11} is independent of $p(C)$, whereas the second term (cost of sampling a source) inside the minimization operation in \eqref{value-iteration-single-sensor-with-fading-11} is given by:

\scriptsize
\begin{eqnarray}\label{value-iteration-single-sensor-with-fading-2}
&& T(1-p(C))+\alpha p(C)\mathbb{E}_{A}J^{(t)}(min\{E-E_{p}-E_{s}+A,B\},1)\nonumber\\
&&+ \alpha (1-p(C))\mathbb{E}_{A}J^{(t)}(min\{E-E_{p}-E_{s}+A,B\},T+1)\nonumber\\
&=& T(1-p(C))+\alpha \mathbb{E}_{A}J^{(t)}(min\{E-E_{p}-E_{s}+A,B\},T+1)\nonumber\\
&&-p(C)\alpha\bigg(\mathbb{E}_{A}J^{(t)}(min\{E-E_{p}-E_{s}+A,B\},T+1)-\nonumber\\
&&\mathbb{E}_{A}J^{(t)}(min\{E-E_{p}-E_{s}+A,B\},1)\bigg)
 \end{eqnarray}
\normalsize

By the induction hypothesis and the linearity of expectation operation, $\mathbb{E}_{A}J^{(t)}(min\{E-E_{p}-E_{s}+A,B\},T+1)- \mathbb{E}_{A}J^{(t)}(min\{E-E_{p}-E_{s}+A,B\},1)$ is non-negative. Thus, the second term inside the minimization operation in  \eqref{value-iteration-single-sensor-with-fading-11} is decreasing in $p(C)$. Now, since $W^{(t)}(\cdot)\uparrow W^{*}(\cdot)$ as $t \uparrow \infty$, $W^*(E,T,C)$ is also decreasing in $p(C)$.

\section{Proof of Theorem~\ref{theorem:single-sensor-single-process-with-fading-policy-structure}}\label{appendix:proof-of-theorem-single-sensor-single-process-fading-threshold-policy}

From \eqref{eqn:Bellman-eqn-single-sensor-single-process-with-fading}, it is obvious that for probed channel state the optimal decision for $E \geq E_{p}+E_{s} $ is to sample the source if and only if the cost of sampling is lower than the cost of not sampling the source, i.e., 
$T+\alpha \mathbb{E}_{A}J^{*}(min\{E-E_{p}+A,T+1) \geq T(1-p(C))+\alpha \mathbb{E}_{A}J^{*}(min\{E-E_{p}-E_{s}+A,B\},T+1)-
\alpha p(C)\bigg(\mathbb{E}_{A}J^{*}(min\{E-E_{p}-E_{s}+A,B\},T+1)-\mathbb{E}_{A}J^{*}(min\{E-E_{p}-E_{s}+A,B\},1)\bigg)$.
Now, by Lemma~\ref{lemma:single-sensor-single-process-J-decreasing-in-p}, $\mathbb{E}_{A}J^{*}(min\{E-E_{p}-E_{s}+A,B\},T+1)- \mathbb{E}_{A}J^{*}(min\{E-E_{p}-E_{s}+A,B\},1)$ is non-negative. Thus the R.H.S. decreases with $p(C)$, whereas the L.H.S. is independent of $p(C)$. Hence, for probed channel state the optimal action is to sample if and only if $p(C) \geq p_{th}(E,T)$ for some suitable threshold function $p_{th}(E,T)$.

\section{Proof of Lemma~\ref{lemma:single-sensor-multiple-process-J-increasing-in-T-decreasing-in-p(C^')}}\label{appendix:proof-of-lemma-single-sensor-multiple-process-fading-cost-increasing-in-T-decrease-in-p}
The proof is similar to the proof of Lemma~\ref{lemma:single-sensor-single-process-J-decreasing-in-p} and it follows from the convergence of value iteration as given below:

\scriptsize
\begin{eqnarray}\label{value-iteration-single-sensor-multiple-process-with-fading}
&&J^{(t+1)}(E \geq E_{p}+E_{s},T_1, T_2,\cdots, T_N)\nonumber\\
&=&min \bigg\{\sum_{i=1}^N T_i+\alpha \mathbb{E}_{A}J^{(t)}(min\{E+A,B\},T_1+1, T_2+1, \cdots,\nonumber\\
&&T_N+1),\mathbb{E}_{C} \bigg( min\{\sum_{i=1}^N T_i+\alpha \mathbb{E}_{A}J^{(t)}(min\{E-E_{p}+A,B\}, \nonumber\\ 
&&T_1+1,T_2+1,  \cdots,T_N+1),\min_{1 \leq k \leq N} \bigg( \sum_{i \neq k } T_i+T_k(1-p(C))+\nonumber\\
&&\alpha p(C)\mathbb{E}_{A}J^{(t)}(min\{E-E_{p}-E_{s}+A,B\},T_1+1,T_2+1, \cdots,  \nonumber\\
&&T_k'=1,T_{k+1}+1,\cdots,T_N+1)+\alpha (1-p(C))\mathbb{E}_{A}J^{(t)}(min\{E-\nonumber\\
&&E_{p}-E_{s}+A,B\},T_1+1,T_2+1, \cdots,T_N+1) \bigg)\} \bigg) \bigg\} \nonumber\\
&&J^{(t+1)}(E < E_{p}+E_{s},T_1, T_2,\cdots, T_N)\nonumber\\
&=&\sum_{i=1}^N T_i+\alpha \mathbb{E}_{A}J^{(t)}(min\{E+A,B\},T_1+1, T_2+1, \cdots,T_N+1)\nonumber\\ 
\end{eqnarray}
\normalsize

Let us start with $J^{(0)}(s) = 0$ for all $s\in S$. Clearly,
$J^{(1)}(E \geq E_{p}+E_{s}, T_1, T_2,\cdots, T_N) = \min \{\sum_{i=1}^N T_i, \mathbb{E}_{C}(min\{\sum_{i=1}^N T_i,\min_ {1 \leq k \leq N} ( \sum_{i \neq k } T_i+T_k(1-p(C)))\}\}$ and $J^{(1)}(E < E_{p}+E_{s}, T_1, T_2,\cdots, T_N) = \sum_{i=1}^N T_i$. Hence, for any given $E$, the value function $J^{(1)}(E,T_1, T_2,\cdots, T_N)$ is an increasing function of $T_1, T_2, \cdots, T_N $. As induction hypothesis, we assume that $J^{(t)}(E,T_1, T_2,\cdots, T_N)$ is also increasing function of $T_1, T_2, \cdots, T_N $. 

Now,

\scriptsize
\begin{eqnarray}\label{value-iteration-single-sensor-multiple-process-with-fading-1}
&&J^{(t+1)}(E \geq E_{p}+E_{s},T_1, T_2,\cdots, T_N)\nonumber\\
&=&min \bigg\{\sum_{i=1}^N T_i+\alpha \mathbb{E}_{A}J^{(t)}(min\{E+A,B\},T_1+1, T_2+1, \cdots,\nonumber\\
&&T_N+1),\mathbb{E}_{C} \bigg( min\{\sum_{i=1}^N T_i+\alpha \mathbb{E}_{A}J^{(t)}(min\{E-E_{p}+A,B\},  \nonumber\\ 
&&T_1+1,T_2+1,\cdots, T_N+1),\min_{1 \leq k \leq N} \bigg( \sum_{i \neq k } T_i+T_k(1-p(C))+\nonumber\\
&&\alpha p(C)\mathbb{E}_{A}J^{(t)}(min\{E-E_{p}-E_{s}+A,B\},T_1+1,T_2+1, \cdots,  \nonumber\\
&&T_k'=1,T_{k+1}+1,\cdots,T_N+1)+\alpha (1-p(C))\mathbb{E}_{A}J^{(t)}(min\{E-\nonumber\\
&&E_{p}-E_{s}+A,B\},T_1+1,T_2+1, \cdots,T_N+1) \bigg)\} \bigg) \bigg\} 
\end{eqnarray}
\normalsize

We seek to show that $J^{(t+1)}(E \geq E_{p}+E_{s} ,T_1, T_2,\cdots, T_N)$ is also increasing in each of $T_1, T_2, \cdots, T_N$. The first term inner to the minimization operation in \eqref{value-iteration-single-sensor-multiple-process-with-fading-1} is increasing in each of $T_1, T_2, \cdots, T_N$, utilizing the induction hypothesis and linear property of expectation operation. On the other hand, the second term has expectation over the channel state and another minimization operator. Also,  first and second terms of the second minimization operator are increasing in each of $T_1, T_2, \cdots, T_N$ by using the induction hypothesis and the linearity of expectation operation. Thus, $J^{(t+1)}(E\geq E_{p}+E_{s} , T_1, T_2,\cdots, T_N)$ is also increasing in each of $T_1, T_2, \cdots, T_N$. 
Similarly, we can assert that $J^{(t+1)}(E < E_{p}+E_{s},T_1, T_2,\cdots, T_N)$ is increasing in each of $T_1, T_2, \cdots, T_N$.
Now, since $J^{(t)}(\cdot)\uparrow J^{*}(\cdot)$ as $t \uparrow \infty$, $J^*(E,T_1, T_2,\cdots, T_N)$ is also increasing in each of $T_1, T_2,\cdots, T_N$. Hence, the first part of the lemma is proved.
Proof of second part of Lemma~\ref{lemma:single-sensor-multiple-process-J-increasing-in-T-decreasing-in-p(C^')} is similar to the proof of second part of Lemma~\ref{lemma:single-sensor-single-process-J-decreasing-in-p}

\section{Proof of Theorem ~\ref{theorem:single-sensor-multiple-process-with-fading-policy-structure}}\label{appendix:proof-of-theorem-single-sensor-multiple-process-fading-threshold-policy}
It is obvious that $J^*(\cdot)$ is invariant to any permutation of $(T_1,T_2,\cdots, T_N)$. Hence, by Lemma~\ref{lemma:single-sensor-multiple-process-J-increasing-in-T-decreasing-in-p(C^')}, $ \arg \min_{1 \leq k \leq N}\bigg(\sum_{i \neq k } T_i+T_k(1-p(C))+\alpha p(C)\mathbb{E}_{A}J^{(t)}(min\{E-E_{p}-E_{s}+A,B\},T_1+1,T_2+1, \cdots, T_k'=1,T_{k+1}+1,\cdots,T_N+1)+\alpha (1-p(C))\mathbb{E}_{A}J^{(t)}(min\{E-E_{p}-E_{s}+A,B\},T_1+1,T_2+1, \cdots,T_N+1) \bigg)=\arg \max_{1 \leq k \leq N} T_k$, i.e., the best process to activate is $k^* \doteq \arg \max_{1 \leq k \leq N} T_k$ in case one process has to be activated. For probed channel state, it is optimal to sample a process if and only if the cost of sampling this process is less than or equal to the cost of not sampling this process, which translates into $T_{k^*}+\alpha \mathbb{E}_{A}J^{*}(min\{E-E_{p}+A,B\},T_1+1, T_2+1,\cdots, T_N+1) \geq T_{k^*}(1-p(C))+
\alpha \mathbb{E}_{A}J^{*}(min\{E-E_{p}-E_{s}+A,B\},T_1+1,T_2+1, \cdots,T_N+1)-\alpha p(C)\bigg(\mathbb{E}_{A}J^{*}(min\{E-E_{p}-E_{s}+A,B\},T_1+1,T_2+1, \cdots,T_N+1)-\mathbb{E}_{A}J^{*}(min\{E-E_{p}-E_{s}+A,B\},T_1+1,T_2+1, \cdots, T_k'=1,T_{k+1}+1,\cdots,T_N+1)\bigg)$. Now, by Lemma~\ref{lemma:single-sensor-multiple-process-J-increasing-in-T-decreasing-in-p(C^')}, $\mathbb{E}_{A}J^{*}(min\{E-E_{p}-E_{s}+A,B\},T_1+1,T_2+1, \cdots,T_N+1)-\mathbb{E}_{A}J^{*}(min\{E-E_{p}-E_{s}+A,B\},T_1+1,T_2+1, \cdots, T_k'=1,T_{k+1}+1,\cdots,T_N+1) $ is non negative. Thus, the R.H.S. is decreasing in $p(C)$ and the L.H.S. is independent of $p(C)$. Hence, the threshold structure of the optimal sampling policy is proved.
\normalsize


\end{document}